\documentclass[10pt]{article}

\usepackage{lmodern}
\usepackage[T1]{fontenc}

\usepackage{latexsym}
\usepackage{amssymb}
\usepackage{amsthm}
\usepackage{amsmath}

\usepackage{amstext}
\usepackage{amsopn}

\usepackage{enumerate}
\usepackage{graphics}
\usepackage{tikz}
\usepackage[all]{xy}

\newtheorem{theorem}{Theorem}

\newtheorem{lemma}[theorem]{Lemma}
\newtheorem {example}[theorem]{Example}

\newcommand{\sgraph}{G}

\newcommand{\weight}{w}
\newcommand{\neighbour}{N}
\newcommand{\products}{\mathcal{P}}

\newcommand{\snet}{\mathcal{S}}
\newcommand{\prodset}{P}

\newcommand{\obar}[1]{\overline{#1}}

\newcommand{\srcnodes}{\mathit{source}}
\newcommand{\payoff}{p}
\newcommand{\strprofile}{s}

\newcommand{\agents}{\mathcal{A}}

\newcommand{\nat}{\mathbb{N}}

%% influence set for extending the threshold
\newcommand{\inflset}{\mathcal{N}}

%% DAG potential

\newcommand{\constutil}{c_0}

\newcommand{\bfe}[1]{\begin{bfseries}\emph{#1}\end{bfseries}\index{#1}}
\newcommand{\myra}{\mbox{$\:\rightarrow\:$}}
\newcommand{\fa}{\mbox{$\forall$}}

\newcommand{\LL}{\mbox{$\ldots$}}
\newcommand{\sse}{\mbox{$\:\subseteq\:$}}
\newcommand{\ES}{\emptyset}
\newcommand{\NI}{\noindent}

\newcommand{\II}{\vspace{2 mm}}
\newcommand{\HB}{\hfill{$\Box$}}
\newcommand{\VV}{\vspace{5 mm}}
\newcommand{\oldbfe}[1]{\begin{bfseries}\emph{#1}\end{bfseries}}

\title{Choosing Products in Social Networks}

\title{Choosing Products in Social Networks}

\author{Sunil Simon
    \thanks{%
      Centre for Mathematics and Computer Science (CWI)
  }
  \and 
Krzysztof R. Apt
    \thanks{%
      Centre for Mathematics and Computer Science (CWI)
      and ILLC, University of Amsterdam, The Netherlands
    }
}

\date{}

\begin{document}

\maketitle

\begin{abstract}
We study the consequences of adopting products by agents who form a
social network. To this end we use the threshold model introduced in
\cite{AM11}, in which the nodes influenced by their neighbours can
adopt one out of several alternatives, and associate with each such
social network a strategic game between the agents. The possibility of
not choosing any product results in two special types of (pure) Nash
equilibria.
  
We show that such games may have no Nash equilibrium and that
determining the existence of a Nash equilibrium, also of a special
type, is NP-complete. The situation changes when the underlying graph
of the social network is a DAG, a simple cycle, or has no source
nodes. For these three classes we determine the complexity of
establishing whether a (special type of) Nash equilibrium exists.
  
We also clarify for these categories of games the status and the
complexity of the finite improvement property (FIP). Further, we
introduce a new property of the uniform FIP which is satisfied when
the underlying graph is a simple cycle, but determining it is
co-NP-hard in the general case and also when the underlying graph has
no source nodes. The latter complexity results also hold for verifying
the property of being a weakly acyclic game.
\end{abstract}

\section{Introduction}

\subsection{Background}

Social networks are a thriving interdisciplinary research area with
links to sociology, economics, epidemiology, computer
science, and mathematics.  A flurry of numerous articles and recent
books, see, e.g., \cite{EK10}, testifies to the
relevance of this field. It deals with such diverse topics as
epidemics, analysis of the connectivity, 
spread of certain patterns of social behaviour, effects of
advertising, and emergence of `bubbles' in financial markets.

One of the prevalent types of models of social networks are the {\em
  threshold models} introduced in~\cite{Gra78}.  In such a setup each
node $i$ has a threshold $\theta(i) \in (0,1]$ and adopts an `item'
  given in advance (which can be a disease, trend, or a specific
  product) when the total weight of incoming edges from the nodes that
  have already adopted this item exceeds the threshold.  One of the
  most important issues studied in the threshold models has been that
  of the spread of an item, see, e.g., \cite{Mor00,KKT03,Che09}.  From
  now on we shall refer to an `item' that is spread by a more specific
  name of a `product'.

In this context very few papers dealt with more than one product.  One
of them is~\cite{IKMW07} with its focus on the notions of
compatibility and bilinguality that result when one adopts both
available products at an extra cost.  Another one is~\cite{BFO10},
where the authors investigate whether the algorithmic approach
of~\cite{KKT03} can be extended to the case of two products.

In \cite{AM11} the authors introduced a new threshold model of a
social network in which nodes (agents) influenced by their neighbours
can adopt one out of \emph{several} products. This model allowed us to
study various aspects of the spread of a given product through a
social network, in the presence of other products.  We analysed from
the complexity point of view the problems of determining whether
adoption of a given product by the whole network is possible
(respectively, necessary), and when a unique outcome of the adoption
process is guaranteed.  We also clarified for social networks without
unique outcomes the complexity of determining whether a given node has
to adopt some (respectively, a given) product in some (respectively,
all) final network(s), and the complexity of computing the minimum and
the maximum possible spread of a given product.

\subsection{Motivation}
\label{subsec:motiv}

Our interest here is in understanding and predicting the behaviour of
the consumers (agents) who form a social network and are confronted
with several alternatives (products).  To carry out such an analysis
we use the above model of~\cite{AM11} and associate with each such
social network a natural strategic game. In this game the strategies
of an agent are products he can choose. Additionally a `null' strategy
is available that models the decision of not choosing any product.
The idea is that after each agent chose a product, or decided not to
choose any, the agents assess the optimality of their choices
comparing them to the choices made by their neighbours.  This leads to
a natural study of (pure) Nash equilibria, in particular of those in
which some, respectively all, constituent strategies are non-null.

Social network games are related to graphical games of \cite{KLS01},
in which the payoff function of each player depends only on a (usually
small) number of other players.  In this work the focus was mainly on
finding mixed (approximate) Nash equilibria.  However, in graphical
games the underlying structures are undirected graphs.  Also, social
network games exhibit the following \bfe{join the crowd} property: the
payoff of each player depends only on his strategy and on the set of
players who chose his strategy and weakly increases when more players
choose his strategy.

Since these games are related to social networks, some natural
special cases are of interest: when the underlying graph is a DAG, has
no source nodes or a simple cycle which is a special case of a graph
without source nodes.
Such social networks correspond respectively to a hierarchical
organization or to a `circle of friends', in which everybody has a
friend (a neighbour). Studying Nash equilibria of these games and
various properties defined in terms of improvement paths allows us to
gain better insights into the consequences of adopting products.

\subsection{Related work}

There are a number of papers that focus on games associated with
various forms of networks, see, e.g., \cite{TW07} for an overview.  A more
recent example is~\cite{AFPT10} that analyses a strategic game between
players being firms who select nodes in an undirected graph in order
to advertise competing products via `viral marketing'.  However, in
spite of the focus on similar questions concerning the existence and
structure of Nash equilibria and on their reachability, from a
technical point of view, the games studied here seem to be unrelated to
the games studied elsewhere.

Still, it is useful to mention the following phenomenon.  When the
underlying graph of a social network has no source nodes, the game
always has a trivial Nash equilibrium in which no agent chooses a
product. A similar phenomenon has been recently observed
in~\cite{BK11} in the case of their network formation games, where
such equilibria are called degenerate.  Further, note that the `join
the crowd' property is exactly the opposite of the defining property
of the congestion games with player-specific payoff functions
introduced in~\cite{Mil96}. In these game the payoff of each player
weakly decreases when more players choose his strategy.  Because in
our case (in contrast to~\cite{Mil96}) the players can have different
strategy sets, the resulting games are not coordination games.

\section{Preliminaries}
\label{sec:prelim}

\subsection{Strategic games}

Assume a set $\{1, \ldots, n\}$ of players, where $n > 1$.  A
\bfe{strategic game} for $n$ players, written as $(S_1, \ldots, S_n,
p_1, \ldots, p_n)$, consists of a non-empty set $S_i$ of
\bfe{strategies} and a \bfe{payoff function} $p_i : S_1 \times \ldots
\times S_n \myra \mathbb{R}$,
for each player $i$.

Fix a strategic game
$
G := (S_1, \ldots, S_n, p_1, \ldots, p_n).
$
We denote $S_1 \times \cdots \times S_n$ by $S$, 
call each element $s \in S$
% \bfe{joint strategy},or 
a \bfe{joint strategy},
denote the $i$th element of $s$ by $s_i$, and abbreviate the sequence
$(s_{j})_{j \neq i}$ to $s_{-i}$. We also write $(s_i,
s_{-i})$ instead of $s$.  
% Finally, we abbreviate $\times_{j \neq i}
% S_j$ to $S_{-i}$.
We call a strategy $s_i$ of player $i$ a \bfe{best response} to a 
joint strategy $s_{-i}$ of his opponents if
$
\fa s'_i \in S_i \ p_i(s_i, s_{-i}) \geq p_i(s'_i, s_{-i}).
$
Next, we call a joint strategy $s$ a \bfe{Nash equilibrium} if each
$s_i$ is a best response to $s_{-i}$, that is, if
\[
\fa i \in \{1, \ldots, n\} \ \fa s'_i \in S_i \ p_i(s_i, s_{-i}) \geq p_i(s'_i, s_{-i}).
\]

Given a joint strategy $s$ we call the sum $\mathit{SW}(s)=\sum_{j =
  1}^{n} p_j(s)$ the \bfe{social welfare} of $s$.  When the social
welfare of $s$ is maximal we call $s$ a \bfe{social optimum}.  Recall
that, given a finite game that has a Nash equilibrium, its \bfe{price
  of anarchy} (respectively, \bfe{price of stability}) is the ratio
$\frac{\mathit{SW}(s)}{\mathit{SW}(s')}$ where $s$ is a social optimum
and $s'$ is a Nash equilibrium with the lowest (respectively, highest)
social welfare. For division by zero, we interpret the
outcome as $\infty$.

Next, we call a strategy $s_i$ of player $i$ a \bfe{better response}
given a joint strategy $s$ if $ p_i(s'_i, s_{-i}) > p_i(s_i, s_{-i})$.
Following the terminology of \cite{MS96}, a \bfe{path} in $S$ is a
sequence $(s^1, s^2, \LL)$ of joint strategies such that for every $k
> 1$ there is a player $i$ such that $s^k = (s'_i, s^{k-1}_{-i})$ for
some $s'_i \neq s^{k-1}_{i}$.  A path $\xi$ is called an
\bfe{improvement path} if it is maximal and for all $k$ smaller than
the length of $\xi$, $p_i(s^k) > p_i(s^{k-1})$, where $i$ is the
player who deviated from $s^{k-1}$.  The last condition simply means
that each deviating player selects a better response.  A game has the
\bfe{finite improvement property} (in short, \bfe{FIP}) if every
improvement path is finite. Obviously, if a game has the FIP, then it
has a Nash equilibrium--the last element of each path.

Finally, recall that a game is called \bfe{weakly acyclic} (see
\cite{Mil96}) if for every joint strategy there exists a finite
improvement path that starts at it.

\subsection{Social networks}

We are interested in specific strategic games defined over social
networks. In what follows we focus on a model of social networks
recently introduced in \cite{AM11}.

Let $V=\{1,\ldots,n\}$ be a finite set of \bfe{agents} and $\sgraph=(V,E,\weight)$ 
a weighted directed graph with $\weight_{ij} \in [0,1]$ being the
weight of the edge $(i,j)$. We often use the notation $i \to j$ to denote
$(i,j) \in E$ and write $i \to^*j$ if there is a path from $i$ to $j$ in the
graph $\sgraph$. 
Given a node $i$ of $G$ we denote by
$\neighbour(i)$ the set of nodes from which there is an incoming edge to $i$.
We call each $j \in \neighbour(i)$ a \oldbfe{neighbour} of $i$ in $G$.
We assume that for each node $i$ such that $\neighbour(i) \neq \ES$, $\sum_{j
\in \neighbour(i)} w_{ji} \leq 1$.
An agent $i \in V$ is said to be a
\bfe{source node} in $\sgraph$ if $\neighbour(i)=\emptyset$.

Let $\products$ be a finite set of alternatives or \bfe{products}.  By
a \bfe{social network} (from now on, just \bfe{network}) we mean a
tuple $\snet=(\sgraph,\products,\prodset,\theta)$, where $\prodset$
assigns to each agent $i$ a non-empty set of products $\prodset(i)$
from which it can make a choice. $\theta$ is a \bfe{threshold
  function} that for each $i \in V$ and $t \in \prodset(i)$ yields a
value $\theta(i,t) \in (0,1]$.
% The threshold $\theta(i,t)$ should be viewed as agent $i$'s
% resistance level to adopt a product $t$. 

Given a network $\snet$ we denote by $\srcnodes(\snet)$ the set of
source nodes in the underlying graph $\sgraph$.  One of the classes of
the networks we shall study are the ones with $\srcnodes(\snet) =
\ES$. 

\subsection{Social network games}
\label{subsec:sng}

Fix a network $\snet=(\sgraph,\products,\prodset,\theta)$.
Each agent can adopt a product from his product set or
choose not to adopt any product. We denote the latter choice by
$t_0$. 

With each network $\snet$ we associate a strategic game
$\mathcal{G}(\snet)$. The idea is that the nodes
simultaneously choose a product or abstain from choosing any.
Subsequently each node assesses his choice by comparing it with the
choices made by his neighbours.  Formally, we define the game as
follows: the players are the agents, the set of strategies for player
$i$ is $S_i :=\prodset(i) \cup \{t_0\}$,
%%\begin{itemize}
%%\item 
for $i \in V$, $t \in \prodset(i)$ and a joint strategy $\strprofile$,
let $ \inflset_i^t(\strprofile) :=\{j \in \neighbour(i) \mid s_j=t\},
$ i.e., $\inflset_i^t(\strprofile)$ is the set of neighbours of $i$
who adopted in $s$ the product $t$.  The payoff function is defined as
follows, where $\constutil$ is some positive constant given in
advance:

\begin{itemize}
\item for $i \in \srcnodes(\snet)$,
$\payoff_i(\strprofile) :=\left \{\begin{array}{ll}
                         0          & \mbox{if~~} \strprofile_i = t_0\\ 
                         \constutil & \mbox{if~~} \strprofile_i \in \prodset(i)\\
  \end{array}
  \right. $\\

\item for $i \not\in \srcnodes(\snet)$,
$\payoff_i(s) :=\left\{\begin{array}{ll}
               0 &\mbox{if~~} \strprofile_i = t_0\\
               \sum\limits_{j \in \inflset_i^t(\strprofile)} w_{ji}-\theta(i,t) & \mbox{if~~} \strprofile_i=t, \mbox{ for some } t \in \prodset(i)\\
\end{array}     
                           \right.$.\\

\end{itemize}
%%\end{itemize}

Let us explain the underlying motivations behind the above definition.
In the first entry we assume that the payoff function for the source
nodes is constant only for simplicity.  In the last section of the
paper we explain that the obtained results hold equally well in the
case when the source nodes have arbitrary positive utility for
each product.

The second entry in the payoff definition is motivated by the
following considerations.  When agent $i$ is not a source node, his
`satisfaction' from a joint strategy depends positively from the
accumulated weight (read: `influence') of his neighbours who made the
same choice as him, and negatively from his threshold level (read:
`resistance') to adopt this product.  The assumption that $\theta(i,t)
> 0$ reflects the view that there is always some resistance to adopt a
product. So when this resistance is high, it can happen that the
payoff is negative. Of course, in such a situation not adopting any product,
represented by the strategy $t_0$, is a better alternative.

The presence of this possibility allows each agent to refrain from
choosing a product.  This refers to natural situations, such as
deciding not to purchase a smartphone or not going on vacation. In the
last section we refer to an initiated research on social network games
in which the strategy $t_0$ is not present.  Such games capture
situations in which the agents have to take some decision, for
instance selecting a secondary school for their children.

By definition the payoff of each player depends only on the strategies
chosen by his neighbours, so the social network games are related to
graphical games of \cite{KLS01}. However, the underlying dependence
structure of a social network game is a directed graph and the
presence of the special strategy $t_0$ available to each player makes
these games more specific.

In what follows for $t \in \products \cup \{t_0\}$ we use the notation
$\obar{t}$ to denote the joint strategy $\strprofile$ where
$\strprofile_j=t$ for all $j \in V$.  This notation is legal only if
for all agents $i$ it holds that $t \in P(i)$.  The presence of the
strategy $t_0$ motivates the introduction and study of special types
of Nash equilibria. A Nash equilibrium $s$ is

\begin{itemize}
\item \bfe{determined} if for all $i$, $s_i \neq t_0$,

\item \bfe{non-trivial} if for some $i$, $s_i \neq t_0$, 

\item \bfe{trivial} if for all $i$, $s_i = t_0$, i.e., $s = \obar{t_0}$.
\end{itemize}

\section{Nash equilibria: general case}
\label{sec:general}

The first natural question that we address is that of the existence of
Nash equilibria in the social network games.  We establish the following
result.

\begin{theorem} \label{thm:np}
Deciding whether for a network $\snet$ the game
$\mathcal{G}(\snet)$ has a (respectively, non-trivial) Nash
equilibrium is NP-complete.
\end{theorem}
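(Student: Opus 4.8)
The plan is to establish membership in NP first, then NP-hardness by a reduction from a suitable NP-complete problem. For membership, observe that a joint strategy $s$ is a succinct certificate: given $s$, we can check in polynomial time whether each $s_i$ is a best response to $s_{-i}$, since player $i$'s payoff depends only on the (at most $n$) neighbours of $i$, the relevant influence sums $\sum_{j \in \inflset_i^t(s)} w_{ji}$ are easy to evaluate, and the best deviation for $i$ is either $t_0$ (giving payoff $0$) or the product $t \in \prodset(i)$ maximizing that influence sum minus $\theta(i,t)$. For the non-trivial variant we additionally check that $s_i \neq t_0$ for some $i$, which is trivial. So both decision problems are in NP.

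For NP-hardness I would reduce from a constraint-satisfaction-style problem, most naturally a graph colouring or SAT variant, engineered so that the only way to assemble a Nash equilibrium is to encode a satisfying assignment. The key gadget exploits the two-sided nature of the payoff: a non-source node $i$ with threshold $\theta(i,t)$ close to $1$ will keep product $t$ only if \emph{enough} of its neighbours also play $t$, and will otherwise strictly prefer $t_0$; conversely, choosing $t_0$ yields payoff $0$, which ties with a product whose influence exactly meets the threshold. By carefully choosing edge weights and thresholds one builds "choice" nodes whose stable options correspond to the possible values of a variable, "consistency" nodes that are in equilibrium only when their inputs agree, and "constraint" nodes that are in equilibrium only when a clause is satisfied. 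The source nodes, which always get payoff $\constutil > 0$ for any product, serve as fixed signal sources feeding constant influence down the DAG of gadgets. One must also ensure that the all-$t_0$ profile is \emph{not} a Nash equilibrium in the constructed game (so that the plain and non-trivial versions coincide on the instance), which is arranged by including at least one source node forced to play a product, whose downstream neighbour then strictly prefers to respond.

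The main obstacle will be the arithmetic bookkeeping in the gadgets: the incoming-weight normalisation condition $\sum_{j \in \neighbour(i)} w_{ji} \le 1$ must be respected everywhere, the thresholds must lie in $(0,1]$, and the gadgets must compose so that local equilibrium of every node is \emph{simultaneously} possible exactly when the source instance is a "yes" instance. I would handle this by using only a small fixed set of weight values (e.g.\ splitting weight evenly among a node's neighbours) and thresholds chosen as exact fractions matching the "satisfying" influence levels, so that a product is a best response precisely on the intended configurations and otherwise $t_0$ strictly wins. Showing the reverse direction -- that every Nash equilibrium of the game projects back to a solution of the source instance -- then follows by reading off the forced choices node by node along the DAG. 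Finally, since the same construction yields an instance whose game has a Nash equilibrium iff it has a non-trivial one, both claims of the theorem follow at once.
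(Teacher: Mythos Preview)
Your membership argument is fine and matches the paper's. The hardness sketch, however, has a structural gap that would make the reduction fail.

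You propose to route signals ``down the DAG of gadgets'' and then ``read off the forced choices node by node along the DAG''. But in this model a DAG-shaped network \emph{always} has a (non-trivial) Nash equilibrium: process the nodes in topological order and let each pick a best response to its already-fixed predecessors (this is exactly Theorem~\ref{thm:dag-ne}(i) in the paper). So any construction whose underlying graph is a DAG yields a ``yes'' instance of the decision problem regardless of the source instance, and the reduction collapses. More generally, because every player can retreat to $t_0$ with payoff $0$, there is no way to make a purely feed-forward clause gadget \emph{unsatisfiable}: the clause node simply plays $t_0$ and is content. What is missing from your plan is a mechanism that \emph{globally} destroys every equilibrium when the intended condition fails.

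The paper supplies exactly that missing ingredient: it first exhibits (Example~\ref{exa:nonash}) a small cyclic network---a directed triangle fed by three source nodes---whose associated game has \emph{no} Nash equilibrium at all. It then reduces from \textsc{Partition} rather than SAT/colouring: $n$ source nodes with product set $\{t_1,t_1'\}$ feed two aggregator nodes $a$ and $b$ with edge weights equal to the partition numbers and threshold $\tfrac{1}{2}$; each of $a$ and $b$ is identified with the ``input'' source of a copy of the no-equilibrium triangle. The point is that $a$ (resp.\ $b$) can only be in equilibrium if it plays $t_0$, since otherwise it activates its attached triangle, which has no equilibrium; and $t_0$ is a best response for both $a$ and $b$ simultaneously iff the incoming $t_1$-weight and the incoming $t_1'$-weight are each at most $\tfrac12$, i.e.\ iff an exact partition exists. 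Thus cycles, and specifically a pre-built no-equilibrium subgame, are the crux of the hardness argument---not an acyclic flow of constraints.
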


To prove it we first construct an example of a social network game
with no Nash equilibrium and then use it to determine the complexity
of the existence of Nash equilibria.

\begin{example} \label{exa:nonash}
\rm
Consider the network given in Figure~\ref{fig:noNe1}, where the product
set of each agent is marked next to the node denoting it and
the weights are labels on the edges. The source nodes are represented
by the unique product in the product set. 
\begin{figure}[ht]
\centering
$
\def\objectstyle{\scriptstyle}
\def\labelstyle{\scriptstyle}
\xymatrix@R=20pt @C=30pt{
& & \{t_1\} \ar[d]_{w_1}\\
& &1 \ar[rd]^{w_2} \ar@{}[rd]^<{\{t_1,t_2\}}\\
\{t_2\} \ar[r]_{w_1} &3 \ar[ur]^{w_2} \ar@{}[ur]^<{\{t_2,t_3\}}& &2 \ar[ll]^{w_2} \ar@{}[lu]_<{\{t_1,t_3\}} &\{t_3\} \ar[l]^{w_1}\\
%%\{t_2\} \ar[ru]_{w_1} & & & &\{t_3\} \ar[lu]^{w_1}\\
}$

\caption{\label{fig:noNe1}A network with no Nash equilibrium}
\end{figure}
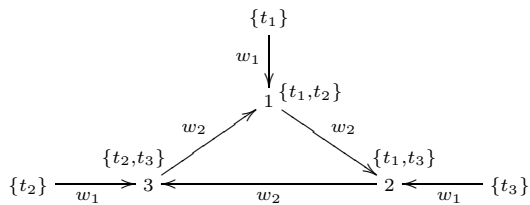

So the weights on the edges from the nodes $\{t_1\}, \{t_2\}, \{t_3\}$
are marked by $w_1$ and the weights on the edges forming the triangle
are marked by $w_2$. We assume that each threshold is a constant
$\theta$, where $ \theta < w_1 < w_2.  $ So it is more profitable to a
player residing on a triangle to adopt the product adopted by his
neighbour residing on a triangle than by the other neighbour who is a
source node. For convenience we represent each joint strategy as a
triple of strategies of players 1, 2 and 3.

It is easy to check that in the game associated with this network no
joint strategy is a Nash equilibrium.  Indeed, each agent residing on
the triangle can secure a payoff of at least $w_1 - \theta>0$, so it
suffices to analyze the joint strategies in which $t_0$ is not
used. There are in total eight such joint strategies. Here is their
listing, where in each joint strategy we underline the strategy that
is not a best response to the choice of other players:
$(\underline{t_1}, t_1, t_2)$, $(t_1, t_1, \underline{t_3})$, $(t_1,
t_3, \underline{t_2})$, $(t_1, \underline{t_3}, t_3)$, $(t_2,
\underline{t_1}, t_2)$, $(t_2, \underline{t_1}, t_3)$, $(t_2, t_3,
\underline{t_2})$, $(\underline{t_2}, t_3, t_3)$.  \HB
\end{example}

\NI
\emph{Proof of Theorem~\ref{thm:np}.}
As in \cite{AM11}, to show NP-hardness, we use a reduction from the
NP-complete PARTITION problem, which is: given $n$ positive rational
numbers $(a_1,\LL,a_n)$, is there a set $S$ such that $\sum_{i\in S}
a_i = \sum_{i\not\in S} a_i$?  Consider an instance $I$ of PARTITION.
Without loss of generality, suppose we have normalised the numbers so
that $\sum_{i=1}^n a_i = 1$. Then the problem instance sounds: does
there exist a set $S$ such that $\sum_{i\in S} a_i = \sum_{i\not\in S}
a_i = \frac{1}{2}$?
  
To construct the appropriate network we employ the networks given in
Figure~\ref{fig:noNe1} and in Figure~\ref{fig:partition}, where for
each node $i\in\{1,\LL,n\}$ we set $w_{i a} = w_{i b} = a_i$, and
assume that the threshold of the nodes $a$ and $b$ is constant and
equal $\frac12$.

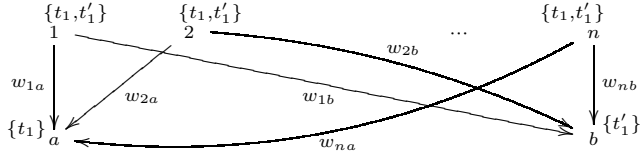
\begin{figure}[ht]
\centering
$
\def\objectstyle{\scriptstyle}
\def\labelstyle{\scriptstyle}
\xymatrix@W=10pt @R=30pt @C=35pt{
1 \ar@{}[r]^<{\{t_1,t_1'\}} \ar[d]_{w_{1a}} \ar[rrrrd]_{w_{1b}}& 2 \ar@{}[r]^<{\{t_1,t_1'\}} \ar[ld]^{w_{2a}} \ar@/^0.7pc/[rrrd]^{w_{2b}}&  &\cdots &n \ar@{}[l]_<{\{t_1,t_1'\}} \ar@/^1.5pc/[lllld]^{w_{na}} \ar[d]^{w_{nb}}\\
a \ar@{}[u]^<{\{t_1\}}  & & & &b \ar@{}[u]_<{\{t_1'\}} \\
}$
\caption{\label{fig:partition}A network related to the PARTITION problem}
\end{figure}

We use two copies of the network given in
Figure~\ref{fig:noNe1}, one unchanged and the other in which the
product $t_1$ is replaced by $t'_1$, and
construct the desired network
$\snet$ by identifying the node $a$ of the network from
Figure~\ref{fig:partition} with the node marked by $\{t_1\}$ in the
network from Figure~\ref{fig:noNe1}, and the node $b$ with the node marked
by $\{t'_1\}$ in the modified version of the network from
Figure~\ref{fig:noNe1}.

Suppose now that a solution to the considered instance of the
PARTITION problem exists, i.e., for some set $S \sse \{1, \LL, n\}$
we have $\sum_{i\in S} a_i = \sum_{i\not\in S} a_i = \frac{1}{2}$.
Consider the game $\mathcal{G}(\snet)$ and the joint strategy formed
by the following strategies:

\begin{itemize}

\item $t_1$ assigned to each node $i \in S$ in
the network from Figure~\ref{fig:partition},

\item $t'_1$ assigned to each node $i \in \{1, \LL, n\} \setminus S$
in the network from Figure~\ref{fig:partition},

\item $t_0$ assigned to the nodes $a$, $b$ and the nodes 1 in both versions of
the network from Figure~\ref{fig:noNe1},

\item $t_3$ assigned to the nodes 2, 3 in both versions
of the networks from Figure~\ref{fig:noNe1}
and the two nodes marked by $\{t_3\}$,

\item $t_2$ assigned to the nodes marked by $\{t_2\}$.

\end{itemize}

We claim that this joint strategy is a non-trivial Nash
equilibrium. Consider first the player (i.e, node) $a$. The
accumulated weight of its neighbours who chose strategy $t_1$ is
$\frac12$, so its payoff after switching to the strategy $t_1$ is 0.
Therefore $t_0$ is indeed a best response for player $a$.  For the
same reason, $t_0$ is also a best response for player $b$.  The
analysis for the other nodes is straightforward.

Conversely, suppose that a joint strategy $s$ is a Nash equilibrium
in the game $\mathcal{G}(\snet)$. Then it is also a non-trivial Nash
equilibrium. We claim that the strategy selected by the node $a$ in
$s$ is $t_0$. Otherwise, this strategy equals $t_1$ and the strategies
selected by the nodes of the network of Figure~\ref{fig:noNe1}
form a Nash equilibrium in the game associated with this network. This
yields a contradiction with our previous analysis of this
network.

So $t_0$ is a best response of the node $a$ to the strategies of the
other players chosen in $s$.  This means that $ \sum_{i \in \{1, \LL,
  n\} \mid s_i = t_1} w_{i a} \leq \frac12$. By the same reasoning
$t_0$ is a best response of the node $b$ to the strategies of the
other players chosen in $s$.  This means that $\sum_{i \in \{1, \LL,
  n\} \mid s_i = t'_1} w_{i b} \leq \frac12$.

But $\sum_{i=1}^n a_i = 1$ and for $i\in\{1,\LL,n\}$, $w_{i a} = w_{i
  b} = a_i$, and $s_i \in \{t_1, t'_1\}$.  So both above inequalities
are in fact equalities.  Consequently for $S := \{i \in \{1, \LL, n\}
\mid s_i = t_1\}$ we have $\sum_{i\in S} a_i = \sum_{i\not\in S} a_i$.
In other words, there exists a solution to the considered instance of
the PARTITION problem.

To prove that the problem lies in NP it suffices to notice that given
a network $\snet=(\sgraph,\products,\prodset,\theta)$ with $n$
nodes checking if a joint strategy is a non-trivial Nash
equilibrium can be done by means of $n \cdot |\products|$ checks, so
in polynomial time.  
%\qed

\section{Nash equilibria: special cases}
\label{sec:special}

In view of the fact that in general Nash equilibria may not exist we now consider networks
with special properties of the underlying directed graph. 
We consider first networks whose underlying graph is a directed
acyclic graph (DAG).  Intuitively, such networks correspond to
hierarchical organizations. 

\begin{theorem}
\label{thm:dag-ne}
Consider a network $\snet$ whose underlying graph is a DAG.
\begin{enumerate}[(i)]
\item $\mathcal{G}(\snet)$ always has a non-trivial Nash equilibrium.
\item Deciding whether $\mathcal{G}(\snet)$ has a determined Nash
  equilibrium is NP-complete.
\end{enumerate}
\end{theorem}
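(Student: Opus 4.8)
For part (i), the plan is to process the nodes of the DAG in a topological order $v_1 \ordnodes v_2 \ordnodes \cdots \ordnodes v_n$, chosen so that every neighbour of $v_i$ (recall that $\neighbour(v_i)$ consists of the predecessors of $v_i$) occurs strictly before $v_i$, and then build a joint strategy $s$ by a single left-to-right sweep. When we reach $v_i$, the strategies of all nodes in $\neighbour(v_i)$ have already been fixed; since $p_{v_i}$ depends only on the strategies of $v_i$ and of $\neighbour(v_i)$, we may simply let $s_{v_i}$ be a best response to those already-determined choices. No later step modifies any node of $\neighbour(v_i)$, so $s_{v_i}$ remains a best response in the final $s$, and hence $s$ is a Nash equilibrium. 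It is non-trivial because a finite DAG has at least one source node $u$, and for a source node every product yields payoff $\constutil > 0$ whereas $t_0$ yields $0$; thus $s_u \neq t_0$ (indeed this holds in \emph{every} Nash equilibrium of a DAG game). One could alternatively phrase the sweep using $\dagrank$ and exhibit a generalised ordinal potential, but the greedy argument is the shortest route.

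For part (ii), membership in NP is immediate and is argued exactly as in the proof of Theorem~\ref{thm:np}: guess a joint strategy and verify, with $n \cdot |\products|$ payoff comparisons, both that no coordinate equals $t_0$ and that each coordinate is a best response. For NP-hardness I would again reduce from PARTITION. Given positive rationals $(a_1,\LL,a_n)$ normalised so that $\sum_{i=1}^n a_i = 1$, build the bipartite (hence acyclic) network with source nodes $1,\LL,n$, each having product set $\{t_1,t_1'\}$, a sink node $a$ with $\prodset(a)=\{t_1\}$ and $\theta(a,t_1)=\frac12$, a sink node $b$ with $\prodset(b)=\{t_1'\}$ and $\theta(b,t_1')=\frac12$, and edges $i \to a$, $i \to b$ with $w_{ia}=w_{ib}=a_i$ for every $i$.

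The correctness argument runs as follows. In any determined Nash equilibrium $s$ of $\mathcal{G}(\snet)$ every node plays a product, so necessarily $s_a=t_1$, $s_b=t_1'$, and each $s_i \in \{t_1,t_1'\}$. Since $t_0$ is always available and yields $0$, the condition that $t_1$ be a best response for $a$ is precisely $\sum_{i : s_i=t_1} a_i \geq \frac12$, and symmetrically $\sum_{i : s_i=t_1'} a_i \geq \frac12$ for $b$; the source nodes are automatically best-responding because every product gives them $\constutil>0$. As $\sum_{i:s_i=t_1} a_i + \sum_{i:s_i=t_1'} a_i = 1$, both inequalities collapse to equalities, and $S := \{i : s_i=t_1\}$ witnesses a solution to PARTITION. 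Conversely, from a set $S$ with $\sum_{i\in S} a_i=\frac12$ one reads off the determined Nash equilibrium assigning $t_1$ to $a$ and to every $i \in S$ and $t_1'$ to $b$ and to every $i \notin S$; all relevant payoffs are then $0$ or $\constutil$, so no player can profitably deviate (including to $t_0$).

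The only delicate point is the bookkeeping around the strategy $t_0$: one must make the product sets of $a$ and $b$ singletons so that in a determined equilibrium the \emph{only} competing deviation that matters for them is $t_0$, which is what lets ``best response'' reduce cleanly to ``payoff $\geq 0$'', i.e. to meeting the threshold $\frac12$; and one must check that no node in the intended equilibrium has a profitable deviation to $t_0$, which is guaranteed because every such node has payoff exactly $0$ or $\constutil$. I expect this $t_0$-bookkeeping to be the main (though still routine) obstacle; everything else is the same PARTITION equality trick already used for Theorem~\ref{thm:np}.
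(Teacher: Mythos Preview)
The paper does not include a proof of this theorem in the provided source, so there is no line-by-line comparison to make. Your argument is nonetheless correct on both counts and fits naturally with the paper's toolkit.

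For part~(i), the topological sweep is exactly right: because $p_i$ depends only on $s_i$ and on $(s_j)_{j \in \neighbour(i)}$, fixing nodes in a linear extension of the DAG and letting each pick a best response to its already-fixed predecessors yields a Nash equilibrium, and any source node strictly prefers a product (payoff $\constutil > 0$) to $t_0$ (payoff $0$), giving non-triviality. The paper's Section~\ref{sec:uniform-FIP} remarks that DAG games have the FIP, which of course also yields~(i); your direct construction is the cleaner route and avoids the detour through improvement paths.

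For part~(ii), your reduction is correct and is precisely the ``stripped down'' version one would expect: the paper's proof of Theorem~\ref{thm:np} uses the bipartite gadget of Figure~\ref{fig:partition} \emph{together with} the no-Nash-equilibrium triangle of Figure~\ref{fig:noNe1} to force nodes $a$ and $b$ to play $t_0$; here you can dispense with the triangle because the requirement that the equilibrium be \emph{determined} directly forces $s_a = t_1$ and $s_b = t_1'$, whereupon the best-response condition against $t_0$ gives $\sum_{i:s_i = t_1} a_i \geq \tfrac12$ and $\sum_{i:s_i = t_1'} a_i \geq \tfrac12$, and these collapse to equalities since the total is $1$. Your remark that $\prodset(a)$ and $\prodset(b)$ must be singletons is the right observation; the weight constraint $\sum_{j \in \neighbour(i)} w_{ji} \leq 1$ is also satisfied since $\sum_i a_i = 1$. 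Nothing is missing.
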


\begin{theorem}
\label{thm:cycle-NE}
Consider a network $\snet=(\sgraph,\products,\prodset,\theta)$ whose underlying graph is a simple cycle.
There is a procedure that runs in time $\mathcal{O}(|\products| \cdot n)$, where $n$
is the number of nodes in $\sgraph$, that decides whether
$\mathcal{G}(\snet)$ has a non-trivial (respectively, determined) Nash equilibrium.
\end{theorem}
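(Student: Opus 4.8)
The plan is to exploit the linear structure of a simple cycle $1 \to 2 \to \cdots \to n \to 1$, where every node has exactly one neighbour (its predecessor). Since each player $i$'s payoff depends only on the strategy of its unique predecessor $\spred(i)$, a joint strategy $s$ is a Nash equilibrium precisely when, for every $i$, the strategy $s_i$ is a best response to the single value $s_{\spred(i)}$. Concretely, if $s_{\spred(i)} = t$ and $t \in \prodset(i)$, then $i$'s best responses are: $t$ itself if $w_{\spred(i),i} - \theta(i,t) \geq 0$, and $t_0$ otherwise (with ties allowing either); if $s_{\spred(i)} = t \notin \prodset(i)$ or $s_{\spred(i)} = t_0$, then $i$'s only best response is $t_0$. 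So the key observation is that once we fix the strategy of one node on the cycle, the best-response condition forces the set of allowed strategies at the next node, and we can propagate this around the cycle deterministically, branching only on how ties are resolved.

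The procedure I would design works as follows. First, to test for a determined Nash equilibrium: for each product $t \in \products$ and each starting node, one would like to check whether there is an assignment with all $s_i \neq t_0$ that is consistent. But here the constraint chains tightly: if $s_i = t$ then $s_{i+1}$ must be the unique product forced by $i$'s choice, which (for a determined equilibrium) must be $t$ again provided $t \in \prodset(i+1)$ and $w_{i,i+1} \geq \theta(i+1,t)$; otherwise no determined equilibrium using $t$ at node $i$ exists down that chain. Hence a determined Nash equilibrium must be constant, equal to some single product $t$, and it exists iff there is $t \in \bigcap_i \prodset(i)$ with $w_{\spred(i),i} \geq \theta(i,t)$ for all $i$ — this is checked in $O(|\products| \cdot n)$ time by scanning each candidate product around the cycle once. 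For a non-trivial Nash equilibrium, the analysis is slightly more delicate because $t_0$ may appear at some nodes: a run of consecutive $t_0$'s is always locally consistent, and a maximal run of non-$t_0$ strategies must again be constant (equal to some product $t$) by the same propagation argument, with the node just after the run free to pick $t_0$, and the node just before the run (which has $s = t_0$) imposing no constraint. So a non-trivial equilibrium exists iff some product $t$ admits at least one node $i$ with $t \in \prodset(i)$ and $w_{\spred(i),i} \geq \theta(i,t)$ — again an $O(|\products| \cdot n)$ scan, after separately handling the degenerate corner case where the cycle has no source nodes and $\obar{t_0}$ is trivially available (so one must be careful to certify non-triviality, i.e.\ that the run of non-$t_0$ values is nonempty).

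The step I expect to be the main obstacle is getting the case analysis around the cycle exactly right, in particular the boundary behaviour between a $t_0$-block and a product-block and the treatment of ties in best responses (when $w_{\spred(i),i} = \theta(i,t)$, node $i$ may choose either $t$ or $t_0$, which affects whether a product-block can start or end at a given node). One must verify that the local best-response conditions at the two junctions of each block are simultaneously satisfiable, and that the whole cyclic assignment closes up consistently — unlike a path, the cycle has no free endpoint, so a single global product-block spanning the entire cycle is the constant equilibrium handled above, while any genuinely "broken" configuration needs at least one $t_0$ to absorb the closure constraint. Once the block structure is pinned down, correctness and the $O(|\products|\cdot n)$ running time both follow from a single pass over the cycle for each product, or even a single pass overall if one records, per edge and per product, whether $w_{\spred(i),i} \geq \theta(i,t)$.
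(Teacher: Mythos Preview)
Your analysis of the determined case is correct: on a simple cycle every determined Nash equilibrium must be the constant profile $\bar t$ for some $t\in\bigcap_i \prodset(i)$ with $w_{\spred(i),i}\ge\theta(i,t)$ for all $i$, and this is checkable in $\mathcal{O}(|\products|\cdot n)$.

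Your treatment of the non-trivial case, however, has a real gap. You write that ``the node just before the run (which has $s=t_0$) impos[es] no constraint'' on the first node of a product block. This is false. If node $i$ is the first node of a maximal block playing a product $t$, then its unique predecessor plays $t_0$, so $p_i(s)=-\theta(i,t)<0$, whereas deviating to $t_0$ gives payoff $0$. Hence $s_i=t$ is \emph{never} a best response at the start of such a block, and no proper product block can occur in a Nash equilibrium. The implication ``$s_i=t\neq t_0\Rightarrow s_{\spred(i)}=t$'' therefore propagates all the way around the cycle, so every non-trivial Nash equilibrium on a simple cycle is in fact determined. Your proposed criterion --- ``some product $t$ admits at least one node $i$ with $t\in\prodset(i)$ and $w_{\spred(i),i}\ge\theta(i,t)$'' --- is too weak; a $2$-cycle with disjoint product sets already falsifies it.

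The paper does not spell out a separate proof of this theorem, but the machinery it builds for Theorem~\ref{thm:nosource1} specialises immediately: in a simple cycle the only strongly connected induced subgraphs are singletons (never self-sustaining, since $\theta>0$) and the whole cycle, so by Lemma~\ref{lm:Ne-nosrc} a non-trivial equilibrium exists iff the full cycle is self-sustaining for some $t$, which is exactly your determined-equilibrium condition. Once you correct the non-trivial case to coincide with the determined one, your argument and the $\mathcal{O}(|\products|\cdot n)$ bound are fine.
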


\begin{theorem}
\label{thm:poa}
  The price of anarchy and the price of stability for the games associated
  with the networks whose underlying graph is a DAG or a simple cycle is
  unbounded.
\end{theorem}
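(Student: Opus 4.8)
The plan is to exhibit, for each of the two classes (DAG and simple cycle), a family of networks $\snet_k$ indexed by a parameter $k$ such that the ratio $\mathit{SW}(s)/\mathit{SW}(s')$ between a social optimum $s$ and the worst (respectively, best) Nash equilibrium $s'$ grows without bound as $k\to\infty$. Since price of anarchy dominates price of stability, it suffices in each case to drive the \emph{price of stability} to infinity, which is the stronger claim; the cleanest way to do this is to arrange that \emph{every} Nash equilibrium has social welfare bounded by a constant (or even exactly $0$), while some joint strategy has social welfare unbounded in $k$. When the Nash welfare is $0$ the ratio is $\infty$ by the division-by-zero convention stated in the preliminaries, so a single well-chosen network already suffices; I will nonetheless present it as a family to make the ``unbounded'' phrasing literal and robust.

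For the DAG case, the idea is to take a single source node $r$ with product set $\{t_1\}$ feeding into a long chain (or a star) of $k$ further nodes $1,\dots,k$, each with product set $\{t_1\}$ and with edge weights chosen so that the accumulated influence at each non-source node exactly equals its threshold. Then at the ``all choose $t_1$'' profile every non-source node has payoff exactly $0$, hence is indifferent between $t_1$ and $t_0$, so this profile is a Nash equilibrium with social welfare $\constutil$ (only the source contributes). By choosing the thresholds strictly above the accumulated weights one instead forces \emph{every} non-source node to strictly prefer $t_0$, so the unique non-trivial Nash equilibrium is the one where only $r$ plays $t_1$ and all others play $t_0$; its social welfare is $\constutil$, a constant independent of $k$. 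Meanwhile, by a separate gadget — or simply by adding $k$ source nodes each with its own product — one gets a joint strategy of social welfare $\Theta(k)$ (each source contributes $\constutil$, and we can also let the chain nodes sit at payoff $0$ or slightly positive). Dividing gives a ratio $\Theta(k)\to\infty$. The one point requiring care is that Theorem~\ref{thm:dag-ne}(i) guarantees a non-trivial Nash equilibrium, so the price of stability is well-defined (the denominator need not be interpreted as $0$); I should make sure the social optimum I write down genuinely beats it, which is immediate once the DAG contains $\Omega(k)$ source nodes each forced to positive payoff.

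For the simple-cycle case, the construction is analogous but must respect the cycle topology: take a cycle on $k$ nodes, all with the same product set $\{t_1\}$, all edge weights equal to some $w$, and all thresholds equal to a constant $\theta$ with $\theta > w$ (legality $\sum_{j\in N(i)} w_{ji}=w\le 1$ is satisfied since each node on a simple cycle has exactly one in-neighbour). Then at any profile, a node playing $t_1$ receives payoff at most $w-\theta<0$, so $t_0$ strictly dominates $t_1$ for every node; hence the \emph{only} Nash equilibrium is the trivial one $\obar{t_0}$, with social welfare $0$. On the other hand the profile $\obar{t_1}$ has social welfare $k(w-\theta)<0$ — which is the wrong sign — so here I instead need thresholds below the weights to get a positive-welfare comparison profile, or I appeal directly to the division-by-zero convention: the social optimum has welfare $0$ as well in the all-equal-weight version, so I should perturb. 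The clean fix: give one distinguished node of the cycle a second product, or make one in-weight large, so that some non-trivial profile has strictly positive social welfare $\varepsilon>0$ while every Nash equilibrium still has welfare $0$; then the ratio is $\varepsilon/0=\infty$. For the genuine ``growing family'' version I take $k$ such perturbed spots, getting optimum welfare $\Theta(k)$ against Nash welfare $0$. The main obstacle, and the only real content, is making these perturbations without creating a non-trivial Nash equilibrium — i.e., ensuring that even the locally ``attractive'' nodes still strictly prefer $t_0$ once one traces around the cycle — which is a short case check on the accumulated weights versus thresholds, entirely analogous to the verification already carried out in Example~\ref{exa:nonash}.
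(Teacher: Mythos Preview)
The paper states Theorem~\ref{thm:poa} without proof, so there is no argument to compare against; I can only assess correctness.

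Your simple-cycle construction, after your own fix, is essentially right. On a cycle of length $k$ with the single product $t_1$, take $k-1$ edges with weight $w$ and threshold $\theta$ satisfying $w-\theta>0$, and one remaining edge with weight $w'<\theta'$. On a cycle with a single product, the set of $t_1$-players in any Nash equilibrium is closed under taking predecessors, hence is all of $V$ or empty; the profile $\obar{t_1}$ is not an equilibrium because the node at the bad edge has negative payoff, so the unique Nash equilibrium is $\obar{t_0}$ with social welfare $0$, while $\mathit{SW}(\obar{t_1})=(k-1)(w-\theta)+(w'-\theta')\to\infty$. This already gives unbounded price of stability (and hence of anarchy). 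Your sketch would be much stronger if it said this explicitly instead of deferring to a ``short case check.''

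Your DAG construction, however, does not work. In the network you describe---a single source $r$ with $\prodset(r)=\{t_1\}$ feeding a chain of $k$ nodes each with $\prodset=\{t_1\}$ and thresholds strictly above the incoming weight---every non-source node has strictly negative payoff whenever it plays $t_1$, regardless of what the others do. Hence the \emph{social optimum} is also the profile where only $r$ plays $t_1$ and the chain plays $t_0$, with welfare exactly $\constutil$; this coincides with the unique Nash equilibrium and the ratio is $1$. Your proposed remedy of ``adding $k$ source nodes'' fails for a structural reason: a source node always selects a product in every Nash equilibrium (its payoff is $\constutil>0$ for any product versus $0$ for $t_0$), so each added source contributes $\constutil$ to \emph{both} the social optimum \emph{and} every Nash equilibrium, leaving the ratio unchanged.

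What is actually needed in the DAG case is a profile that is socially good but individually unstable at one pivotal node whose selfish choice blocks propagation. One construction that works: let the source $r$ have $\prodset(r)=\{t_2\}$; let node $1$ have $\prodset(1)=\{t_1,t_2\}$ with an edge $r\to 1$ of weight $w>\theta(1,t_2)$; and let nodes $2,\dots,k+1$ form a chain rooted at $1$ with $\prodset=\{t_1\}$, edge weights $w'$ and thresholds $\theta'$ satisfying $w'-\theta'>0$. In the unique Nash equilibrium node $1$ best-responds with $t_2$, which kills the chain, giving welfare $\constutil+(w-\theta(1,t_2))=O(1)$. But in the profile where node $1$ plays $t_1$ (at personal cost $-\theta(1,t_1)$) the whole chain adopts $t_1$, giving welfare $\constutil-\theta(1,t_1)+k(w'-\theta')\to\infty$. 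This yields unbounded price of stability for DAGs.
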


Finally, we consider the case when the underlying graph $\sgraph =
(V,E)$ of a network $\snet$ has no source nodes, i.e., for all
$i \in V$, $\neighbour(i) \neq \emptyset$.  Intuitively, such a
network corresponds to a `circle of friends': everybody has a friend
(a neighbour).  For such networks we prove the following
result.

\begin{theorem} \label{thm:nosource1}
Consider a network $\snet=(\sgraph,\products,\prodset,\theta)$ whose underlying graph has no source nodes.
There is a procedure that runs in time $\mathcal{O}(|\products| \cdot n^3)$, where $n$
is the number of nodes in $\sgraph$, that decides whether
$\mathcal{G}(\snet)$ has a non-trivial Nash equilibrium.
 \end{theorem}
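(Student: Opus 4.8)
The plan is to reduce the question to a small number of monotone fixpoint computations, one per product, the key being that a non-trivial Nash equilibrium, when it exists, may be taken of a very simple form. First I would restrict attention to joint strategies $s$ in which all agents not playing $t_0$ play one and the \emph{same} product. To see this costs nothing, let $s$ be an arbitrary non-trivial Nash equilibrium, let $t \in \products$ be a product used by some agent, and set $A := \{i \mid s_i = t\} \neq \ES$. Since $\sgraph$ has no source nodes, each $i \in A$ has payoff $\sum_{j \in \neighbour(i) \cap A} w_{ji} - \theta(i,t)$, and as the option $t_0$ (with payoff $0$) is available to $i$, the best-response property gives
\[
\textstyle \sum_{j \in \neighbour(i) \cap A} w_{ji} \ \geq\ \theta(i,t) \quad \text{for all } i \in A. \qquad (\star)
\]
Thus a non-trivial Nash equilibrium can exist only if for some $t$ there is a non-empty $A \subseteq V_t := \{v \mid t \in \prodset(v)\}$ with property $(\star)$.

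Next I would prove the converse and simultaneously make the search effective. If $A_1, A_2 \subseteq V_t$ both satisfy $(\star)$ then so does $A_1 \cup A_2$, since enlarging the set only increases each left-hand side; hence there is a unique largest set $A_t^{\max} \subseteq V_t$ with property $(\star)$, and it is computed by the peeling procedure that starts from $V_t$ and repeatedly deletes every surviving node $i$ with $\sum_{j \in \neighbour(i) \cap A} w_{ji} < \theta(i,t)$ until none remains --- a routine induction shows no node belonging to \emph{any} set satisfying $(\star)$ is ever deleted, so the procedure terminates with $A_t^{\max}$. The key observation is that the joint strategy $s^{\ast}$ in which the nodes of $A_t^{\max}$ play $t$ and all other nodes play $t_0$ is then a Nash equilibrium: a node in $A_t^{\max}$ best-responds by $(\star)$, while every product $t' \neq t$ is unused and so yields it the negative payoff $-\theta(i,t')$; a node $i \notin A_t^{\max}$ plays $t_0$ with payoff $0$, and its only potentially profitable deviation --- to $t$, when $t \in \prodset(i)$ --- carries weight $\sum_{j \in \neighbour(i) \cap A_t^{\max}} w_{ji}$, which is strictly below $\theta(i,t)$ precisely because $i$ was deleted at a stage whose surviving set still contained $A_t^{\max}$ (and if $t \notin \prodset(i)$ there is nothing to check). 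Consequently $\mathcal{G}(\snet)$ has a non-trivial Nash equilibrium if and only if $A_t^{\max} \neq \ES$ for some $t \in \products$, and $s^{\ast}$ is then an explicit witness.

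For the running time: for each of the $|\products|$ products, the peeling procedure deletes at least one node per round, hence runs for at most $n$ rounds, and each round recomputes all sums $\sum_{j \in \neighbour(i) \cap A} w_{ji}$ in time $\mathcal{O}(|E|) = \mathcal{O}(n^2)$; this gives $\mathcal{O}(n^3)$ per product and $\mathcal{O}(|\products| \cdot n^3)$ overall.

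The step I expect to be the main obstacle is the reduction to single-product equilibria, and within it the point that the \emph{downward} condition needed at the nodes outside $A$, namely $\sum_{j \in \neighbour(i) \cap A} w_{ji} \leq \theta(i,t)$, is satisfied automatically by the maximal set that peeling returns --- so one never has to search for a set obeying two opposing monotone constraints at once. Verifying this, and that peeling really returns the maximal set with property $(\star)$, is where the real work lies; the boundary cases (for instance $A_t^{\max} = V_t$, and ties when a weight sum equals a threshold exactly) should be checked but cause no difficulty.
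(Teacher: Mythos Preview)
Your argument is correct and, at its core, identical to the paper's: your peeling set $A_t^{\max}$ is exactly the paper's set $X_t = \bigcap_m X_t^m$ (defined by $X_t^0 = \{i \mid t \in \prodset(i)\}$ and $X_t^{m+1} = \{i \mid \sum_{j \in \neighbour(i)\cap X_t^m} w_{ji} \geq \theta(i,t)\}$), the witness equilibrium you build on $A_t^{\max}$ is the same one the paper builds on $X_t$, and the complexity count is the same $\mathcal{O}(|\products|\cdot n^3)$.

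The one organisational difference is that the paper routes the forward direction through an auxiliary structural notion, the \emph{self-sustaining strongly connected subgraph}: it first shows that any non-trivial Nash equilibrium forces, for each used product, a self-sustaining SCS inside the set of its adopters, and then observes that any such SCS is contained in $X_t$. You bypass this entirely by noting directly that the full adopter set $A = \{i \mid s_i = t\}$ already satisfies $(\star)$ and hence sits inside $A_t^{\max}$. Your route is shorter and loses nothing for the theorem at hand; the paper's detour buys an additional structural lemma (every non-trivial equilibrium is ``seeded'' by self-sustaining SCSs) that it flags as of independent interest but does not actually need for the algorithm.
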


The proof of Theorem \ref{thm:nosource1} requires some
characterization results that are of independent interest.  The
following concept plays a crucial role.  Here and elsewhere we only
consider subgraphs that are \emph{induced} and identify each such
subgraph with its set of nodes.  (Recall that $(V',E')$ is an induced
subgraph of $(V,E)$ if $V' \sse V$ and $E' = E \cap (V' \times V')$.)

We say that a (non-empty) strongly connected subgraph (in short, SCS)
$C_t$ of $\sgraph$ is \bfe{self sustaining} for a product $t$ if for
all $i \in C_t$,

\begin{itemize}
\item $t \in \prodset(i)$,
\item $\sum\limits_{j \in \neighbour(i)\cap C_t} w_{ji} \geq
  \theta(i,t)$.
\end{itemize}

An easy observation is that if $\snet$ is a network with no
source nodes, then it always has a trivial Nash equilibrium,
$\obar{t_0}$. The following lemma
states that for such networks every non-trivial Nash equilibrium
satisfies a structural property which relates it to the set of self
sustaining SCSs in the underlying graph. We use the following
notation: for a joint strategy $\strprofile$ and product $t$,
$\agents_t(\strprofile):=\{i \in V \mid \strprofile_i=t\}$ and
$\prodset(\strprofile):=\{t \mid \exists i \in V \text{ with
}s_i=t\}$.

\begin{lemma}
\label{lm:Ne-struct}
Let $\snet=(\sgraph,\products,\prodset,\theta)$ be a network
whose underlying graph has no source nodes. If $\strprofile \neq
\obar{t_0}$ is a Nash equilibrium in $\mathcal{G}(\snet)$ then for all
products $t \in \prodset(\strprofile) \setminus \{t_0\}$ and $i \in
\agents_t(s)$ there exists a self sustaining SCS $C_t \subseteq
\agents_t(\strprofile)$ for $t$ and $j \in C_t$ such that $j \to^* i$.
\end{lemma}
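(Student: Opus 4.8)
The plan is to argue by tracing influence backwards from an arbitrary node playing $t$, exploiting the no-source-node hypothesis together with the Nash condition to extract a self sustaining SCS.

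First I would fix a product $t \in \prodset(\strprofile) \setminus \{t_0\}$ and a node $i$ with $\strprofile_i = t$. Since $\strprofile$ is a Nash equilibrium and $i$ is not a source node (the graph has none), the payoff of $i$ under $\strprofile$ must be at least $0$, because switching to $t_0$ guarantees payoff $0$; hence $\sum_{j \in \inflset_i^t(\strprofile)} w_{ji} - \theta(i,t) \ge 0$, so in particular $\inflset_i^t(\strprofile) \neq \emptyset$, i.e., $i$ has a neighbour in $\agents_t(\strprofile)$ that also plays $t$. Iterating this observation: starting from $i$ and repeatedly moving to a same-strategy neighbour, I build an infinite backward walk inside $\agents_t(\strprofile)$. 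Since $V$ is finite, this walk revisits a node, so $\agents_t(\strprofile)$ contains a cycle; more carefully, the set of nodes reachable by such backward walks from $i$ is a finite nonempty subset $A \subseteq \agents_t(\strprofile)$ with the property that every node of $A$ has at least one same-$t$ in-neighbour lying in $A$.

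Next I would extract from $A$ the desired self sustaining SCS. Consider the directed graph on $A$ with edge set restricted to $E \cap (A \times A)$; every vertex has in-degree at least one here (by construction of $A$), so some strongly connected component $C$ of this restricted graph is a "sink" in the condensation — that is, $C$ has no outgoing edges within $A$. I would then verify $C$ is self sustaining for $t$: for each $k \in C$, we have $t \in \prodset(k)$ since $k \in \agents_t(\strprofile)$; and because $C$ is a terminal SCC, every edge of $E$ into $k$ from a node of $A$ that carries $t$ must originate in $C$, while the Nash/non-source argument above guarantees $k$'s positive payoff uses only same-$t$ in-neighbours — and here is where I would have to be slightly careful to ensure those in-neighbours lie in $A$, hence in $C$; this follows because every same-$t$ in-neighbour of $k$ is itself a node from which $i$ is backward-reachable through same-$t$ edges (prepend $k$), so it belongs to $A$. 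Thus $\sum_{j \in \neighbour(k) \cap C} w_{jk} \ge \sum_{j \in \inflset_k^t(\strprofile)} w_{jk} \ge \theta(k,t)$, giving the self sustaining condition. Finally, since $C$ lies on a backward path from $i$ inside $\agents_t(\strprofile)$, picking any $j \in C$ yields $j \to^* i$ with the whole path in $\agents_t(\strprofile)$, completing the proof.

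The main obstacle I anticipate is precisely the bookkeeping in the middle step: one must be sure that when selecting a terminal SCC $C$ of the restricted graph, the in-neighbours of a node $k \in C$ that realize $k$'s non-negative payoff are genuinely confined to $C$ rather than merely to $A$, and that the backward reachability to $i$ is preserved. Defining $A$ as the set of all nodes $v$ such that there is a path $v = v_0 \to v_1 \to \cdots \to v_m = i$ with every $v_\ell$ playing $t$ — rather than as a single walk — makes $A$ automatically closed under taking same-$t$ in-neighbours, which is exactly what the SCC argument needs, so the cleanest route is to work with that definition from the outset.
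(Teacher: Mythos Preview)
The paper actually states this lemma without proof, so there is no in-text argument to compare against. Your overall strategy --- use the Nash condition to show every $t$-player has a same-$t$ in-neighbour, form the set $A$ of nodes that reach $i$ through $t$-players, and then extract a strongly connected piece --- is the natural one and essentially works. There is, however, a genuine slip in the SCC step.

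You choose $C$ to be a \emph{sink} (terminal) SCC of the induced graph on $A$, i.e., a component with no outgoing edges to $A \setminus C$, and then assert that ``every edge of $E$ into $k$ from a node of $A$ \ldots\ must originate in $C$''. That inference is backwards: a sink SCC constrains outgoing edges, not incoming ones, so a node $k \in C$ can perfectly well have in-neighbours in $A \setminus C$. With a sink SCC you therefore cannot conclude $\inflset_k^t(\strprofile) \subseteq C$, and the self-sustaining inequality $\sum_{j \in \neighbour(k) \cap C} w_{jk} \ge \theta(k,t)$ need not follow.

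The fix is simply to take $C$ to be a \emph{source} (initial) SCC of the restricted graph on $A$; one exists because the condensation is a finite DAG. For $k \in C$ every in-neighbour of $k$ lying in $A$ must then lie in $C$; combined with your (correct) observation that every same-$t$ in-neighbour of $k$ lies in $A$, this gives $\inflset_k^t(\strprofile) \subseteq \neighbour(k) \cap C$ and hence the self-sustaining bound. Reachability $j \to^{*} i$ for $j \in C$ is automatic, since by the definition of $A$ every node of $A$ reaches $i$ along a path entirely contained in $A$. With this one correction your argument goes through.
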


\begin{lemma}
\label{lm:Ne-nosrc}
Let $\snet=(\sgraph,\products,\prodset,\theta)$ be a network
whose underlying graph has no source nodes. The joint strategy
$\obar{t_0}$ is a unique Nash equilibrium in $\mathcal{G}(\snet)$ iff
there does not exist a product $t$ and a self sustaining SCS $C_t$ for
$t$ in $\sgraph$.
\end{lemma}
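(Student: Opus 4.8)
The plan is to prove the two implications separately, handling the left-to-right direction by contraposition and putting all the work into the right-to-left direction.

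First I would dispose of the direction ``no self sustaining SCS exists $\To$ $\obar{t_0}$ is the unique Nash equilibrium''. Since the underlying graph has no source nodes, $\obar{t_0}$ is already known to be a Nash equilibrium (the easy observation recalled just before the lemma), so it suffices to rule out any other one. Suppose $\strprofile \neq \obar{t_0}$ were a Nash equilibrium. Then $\prodset(\strprofile) \setminus \{t_0\} \neq \ES$, so we may pick a product $t$ in it and a node $i \in \agents_t(\strprofile)$. Lemma~\ref{lm:Ne-struct} then yields a self sustaining SCS for $t$ contained in $\agents_t(\strprofile)$, contradicting the hypothesis. Hence $\obar{t_0}$ is the unique Nash equilibrium.

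The substance of the argument is the converse, which I would prove contrapositively: given a product $t$ and a self sustaining SCS $C_t$ for $t$, I would construct a non-trivial Nash equilibrium, so that $\obar{t_0}$ is not unique. The naive candidate --- everyone in $C_t$ plays $t$, everyone else plays $t_0$ --- need not be a Nash equilibrium, because a node outside $C_t$ with $t$ in its product set and enough accumulated weight coming in from $C_t$ would profitably switch to $t$. I would repair this by a monotone saturation process: put $A_0 := C_t$ and, as long as there is a node $i \notin A_k$ with $t \in \prodset(i)$ and $\sum_{j \in \neighbour(i) \cap A_k} w_{ji} > \theta(i,t)$, set $A_{k+1} := A_k \cup \{i\}$. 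This terminates since $V$ is finite; let $A$ be the final set. The key point is that the two defining conditions of a self sustaining set survive the process for the nodes playing $t$: for every $i' \in A_k$ the set $\neighbour(i') \cap A$ only grows, so $\sum_{j \in \neighbour(i') \cap A} w_{ji'} \geq \sum_{j \in \neighbour(i') \cap C_t} w_{ji'} \geq \theta(i',t)$, and each newly added node satisfies the inequality strictly; moreover $t \in \prodset(i)$ is required at the moment of adding. Thus $A \supseteq C_t$ satisfies $t \in \prodset(i)$ and $\sum_{j \in \neighbour(i) \cap A} w_{ji} \geq \theta(i,t)$ for all $i \in A$, while termination gives $\sum_{j \in \neighbour(i) \cap A} w_{ji} \leq \theta(i,t)$ for every $i \notin A$ with $t \in \prodset(i)$.

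Finally I would verify that the joint strategy $\strprofile$ with $\strprofile_i = t$ for $i \in A$ and $\strprofile_i = t_0$ otherwise is a Nash equilibrium. Since no node plays a product other than $t$ or $t_0$, for every node $i$ a deviation to any $t' \notin \{t, t_0\}$ gives payoff $-\theta(i,t') < 0$, and a deviation to $t_0$ gives payoff $0$. For $i \in A$ the current payoff $\sum_{j \in \neighbour(i) \cap A} w_{ji} - \theta(i,t)$ is $\geq 0$, so $t$ is a best response; for $i \notin A$ the current payoff is $0$, and switching to $t$ (possible only when $t \in \prodset(i)$) gives payoff $\leq 0$, so $t_0$ is a best response. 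As $A \supseteq C_t$ is non-empty, $\strprofile \neq \obar{t_0}$, which finishes the proof. The only genuine obstacle is seeing that the obvious candidate fails and that the correct fix is this one-sided saturation, together with the observation that, being monotone, it preserves the self sustaining inequalities for the nodes that keep playing $t$.
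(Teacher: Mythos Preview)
Your proof is correct and follows essentially the same approach as the paper: the ($\Leftarrow$) direction is dispatched via Lemma~\ref{lm:Ne-struct}, and for ($\Rightarrow$) you saturate $C_t$ to a set on which playing $t$ (and $t_0$ elsewhere) is a Nash equilibrium. The only cosmetic difference is that the paper adds all eligible nodes in each round using the non-strict inequality $\sum_{k \in \neighbour(j) \cap R_m} w_{kj} \geq \theta(j,t)$, whereas you add one node at a time with a strict inequality; both choices yield a valid equilibrium, so this does not affect the argument.
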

\begin{proof}

\noindent($\Leftarrow$) By Lemma~\ref{lm:Ne-struct}.

%% Suppose there exists a joint strategy
%% $\strprofile \neq \obar{t_0}$ such that $\strprofile$ is a Nash
%% equilibrium. Then by Lemma~\ref{lm:Ne-struct} there exists a self
%% sustaining SCS $C_t$ for every product $t \in \prodset(\strprofile)$.

\noindent ($\Rightarrow$) Suppose there exists a self sustaining SCS
$C_t$ for a product $t$. Let $R$ be the set of nodes reachable from
$C_t$ which eventually can adopt product $t$. Formally, $R:=\bigcup_{m
  \in \nat} R_m$ where
\begin{itemize}
\item $R_0:=C_t$,
\item $R_{m+1}:=R_m \cup \{j \mid t \in \prodset(j) \mbox{ and } \sum\limits_{k \in \neighbour(j) \cap R_m} w_{kj} \geq \theta(j,t)\}$.
\end{itemize}

Let $s$ be the joint strategy such that for all $j \in R$, we have
$\strprofile_j=t$ and for all $k \in V \setminus R$, we have
$\strprofile_k=t_0$. It follows directly from the definition of $R$
that $\strprofile$ satisfies the following properties:
\begin{enumerate}
\item[(P1)] for all $i \in V$, $\strprofile_i=t_0$ or $\strprofile_i=t$,
\item[(P2)] for all $i \in V$, $\strprofile_i \neq t_0$ iff $i \in R$,
\item[(P3)] for all $i \in V$, if $i \in R$ then $\payoff_i(s) \geq 0$.
\end{enumerate}

We show that $\strprofile$ is a Nash equilibrium. Consider first any
$j$ such that $\strprofile_j=t$ (so $\strprofile_j\neq t_0$). By (P2)
$j \in R$ and by (P3) $\payoff_j(s)\geq 0$. Since
$\payoff_j(s_{-j},t_0)=0\leq \payoff_j(s)$, player $j$ does not gain
by deviating to $t_0$. Further, by (P1), for all $k \in
\neighbour(j)$, $\strprofile_k=t$ or $\strprofile_k=t_0$ and therefore
for all products $t' \neq t$ we have $\payoff_j(\strprofile_{-j},t')
<0 \leq \payoff_j(s)$. Thus player $j$ does not gain by deviating to
any product $t' \neq t$ either.

Next, consider any $j$ such that $\strprofile_j=t_0$. We have
$\payoff_j(\strprofile)=0$ and from (P2) it follows that $j \not\in
R$. By the definition of $R$ we have $\sum\limits_{k \in \neighbour(j)
  \cap R} w_{kj} < \theta(j,t)$. Thus
$\payoff_j(\strprofile_{-j},t) <0$. Moreover, for all products $t'
\neq t$ we also have $\payoff_j(\strprofile_{-j},t') <0$ for the same
reason as above. So player $j$ does not gain by a unilateral
deviation. We conclude that $\strprofile$ is a Nash equilibrium.
%\qed
\end{proof}

\NI For a product $t \in \products$, we define the set
$X_t:=\bigcap_{m \in \nat} X_t^m$, where
\begin{itemize}
\item $X_t^0:=\{i \in V \mid t \in \prodset(i)\}$,
\item $X_t^{m+1}:=\{i \in V \mid \sum_{j \in \neighbour(i) \cap
  X_t^m} w_{ji} \geq \theta(i,t)\}$.
\end{itemize}

%% We need the following characterization which leads to a direct
%% proof of the claimed result.
The following characterization leads to a direct proof of the claimed
result.

\begin{lemma}
\label{lm:proc-Ne-exists}
Let $\snet$ be a network
whose underlying graph has no source nodes. 
There exists a non-trivial Nash equilibrium in
$\mathcal{G}(\snet)$ iff there exists a product $t$ such that $X_t
\neq \emptyset$.
\end{lemma}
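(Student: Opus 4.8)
The plan is to relate the sets $X_t$ to self sustaining SCSs and then invoke Lemma~\ref{lm:Ne-struct} and (the construction inside) Lemma~\ref{lm:Ne-nosrc}. The key observation is that $X_t$ is precisely the largest set on which product $t$ can ``support itself'': by construction the sequence $X_t^0 \supseteq X_t^1 \supseteq \cdots$ is decreasing (one checks $X_t^1 \subseteq X_t^0$ using $\theta(i,t)>0$, and monotonicity of the operator $Y \mapsto \{i \mid \sum_{j \in N(i)\cap Y} w_{ji} \geq \theta(i,t)\}$ propagates this), so it stabilises, and the limit $X_t$ satisfies the fixed-point property that every $i \in X_t$ has $t \in \prodset(i)$ and $\sum_{j \in N(i) \cap X_t} w_{ji} \geq \theta(i,t)$. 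In other words, $X_t \neq \emptyset$ iff there is a non-empty set $C$ with $t \in \prodset(i)$ and $\sum_{j \in N(i)\cap C} w_{ji} \geq \theta(i,t)$ for all $i \in C$.

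For the direction ($\Leftarrow$): given $X_t \neq \emptyset$, I would show it contains a self sustaining SCS. Take any terminal strongly connected component $C_t$ of the induced subgraph on $X_t$ (a sink SCC in the condensation); since the graph has no source nodes every node of $X_t$ has an in-neighbour, but more to the point, for $i$ in a terminal SCC all of $i$'s in-neighbours that lie in $X_t$ already lie in $C_t$ — wait, that is false in general, so instead I argue directly: the whole of $X_t$ already satisfies the self sustaining inequality, and it is a union of SCCs arranged in a DAG; pick a maximal SCC under reachability, i.e.\ one from which no other SCC of $X_t$ is reachable. Hmm — the cleaner route is: just observe $X_t$ itself need not be strongly connected, but any SCS is built by taking a bottom SCC $C$ of the condensation of $X_t$, for which $N(i) \cap X_t = N(i) \cap C$ for all $i \in C$ (edges from $X_t \setminus C$ into $C$ are allowed, so this still fails). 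The honest fix: I will show instead that $X_t \neq \emptyset$ implies the existence of a self sustaining SCS by Lemma~\ref{lm:Ne-nosrc}'s contrapositive is not available, so I proceed via the explicit Nash equilibrium. Using the set $R$ built from $C_t$ in the proof of Lemma~\ref{lm:Ne-nosrc}, or more directly: define $s$ by $s_i = t$ for $i \in X_t$ and $s_i = t_0$ otherwise; the fixed-point property of $X_t$ gives $p_i(s) \geq 0$ for $i \in X_t$, so no such $i$ deviates (to $t_0$ the payoff is $0$; to any $t' \neq t$ the payoff is $< 0$ since neighbours play only $t$ or $t_0$), and for $i \notin X_t$ we have $\sum_{j \in N(i)\cap X_t} w_{ji} < \theta(i,t)$ by definition of $X_t$ as the fixed point, so $i$ does not gain by switching to $t$, and switches to other products are likewise unprofitable. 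Hence $s$ is a non-trivial Nash equilibrium.

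For the direction ($\Rightarrow$): let $s \neq \obar{t_0}$ be a non-trivial Nash equilibrium and pick any $t \in \prodset(s) \setminus \{t_0\}$. By Lemma~\ref{lm:Ne-struct} there is a self sustaining SCS $C_t \subseteq \agents_t(s)$. I then show $C_t \subseteq X_t$, which gives $X_t \neq \emptyset$. This is an induction on $m$ showing $C_t \subseteq X_t^m$: the base case $C_t \subseteq X_t^0$ is the first defining condition of self sustaining; for the step, if $C_t \subseteq X_t^m$ then for each $i \in C_t$ we have $\sum_{j \in N(i)\cap X_t^m} w_{ji} \geq \sum_{j \in N(i)\cap C_t} w_{ji} \geq \theta(i,t)$ by the second self sustaining condition, so $i \in X_t^{m+1}$. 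Taking the intersection over all $m$ yields $C_t \subseteq X_t$, so $X_t \neq \emptyset$.

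The main obstacle is making the ($\Leftarrow$) direction clean: one must verify that the candidate joint strategy $s$ with $s_i = t$ exactly on $X_t$ really is a Nash equilibrium, and this needs the precise fixed-point characterisation of $X_t$ (both that every node of $X_t$ meets its threshold using only neighbours inside $X_t$, and that every node outside $X_t$ fails to meet its threshold against $X_t$) — equivalently one reuses verbatim the verification from the proof of Lemma~\ref{lm:Ne-nosrc} with $R$ replaced by $X_t$. The stabilisation argument for the decreasing chain $X_t^m$ (finiteness of $V$) and the fact that $\theta(i,t) > 0$ forces $i \in X_t^{m+1}$ to have a nonempty relevant neighbour set are the small technical points to get right.
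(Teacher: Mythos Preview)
Your proposal is correct and, once the exploratory detours in the ($\Leftarrow$) direction are discarded, follows essentially the same route as the paper: for ($\Leftarrow$) you define $s_i = t$ on $X_t$ and $s_i = t_0$ elsewhere and verify this is a Nash equilibrium using the fixed-point property of $X_t$, exactly as the paper does; for ($\Rightarrow$) you show that any self sustaining SCS $C_t$ satisfies $C_t \subseteq X_t$ (the paper states this ``follows directly from the definitions''---your induction on $m$ is the natural way to spell it out), and then conclude $X_t \neq \emptyset$. The only cosmetic difference is that for ($\Rightarrow$) you invoke Lemma~\ref{lm:Ne-struct} directly on a given non-trivial equilibrium, whereas the paper argues by contrapositive via Lemma~\ref{lm:Ne-nosrc}; since Lemma~\ref{lm:Ne-nosrc} itself rests on Lemma~\ref{lm:Ne-struct}, the two are equivalent.
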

\begin{proof}
Suppose $\snet=(\sgraph,\products,\prodset,\theta)$.

\noindent
$(\Rightarrow)$ It follows directly from the definitions
that if there is a self sustaining SCS $C_t$ for product $t$
then $C_t \subseteq X_t$. Suppose now that for all $t$, $X_t =
\emptyset$. Then for all $t$, there is no self sustaining SCS
for $t$. So by Lemma~\ref{lm:Ne-nosrc}, $\obar{t_0}$ is a unique
Nash equilibrium.

\noindent$(\Leftarrow)$ Suppose there exists $t$ such that $X_t \neq
\emptyset$. Let $\strprofile$ be the joint strategy defined as
follows:
\[
s_i:= \begin{cases}
      t & \mathrm{if}\ i \in X_t \\
      t_0 & \mathrm{if}\ i \not\in X_t
\end{cases}
\]

By the definition of $X_t$, for all $i \in X_t$,
$\payoff_i(\strprofile) \geq 0$. So no player $i \in X_t$ gains by
deviating to $t_0$ (as then his payoff would become $0$) or to a
product $t' \neq t$ (as then his payoff would become negative since no
player adopted $t'$). Also, by the definition of $X_t$ and of the
joint strategy $\strprofile$, for all $i \not\in X_t$ and for all $t' \in
\prodset(i)$, $\payoff_i(t',\strprofile_{-i})<0$. Therefore, no player
$i \not\in X_t$ gains by deviating to a product $t'$ either. It
follows that $\strprofile$ is a Nash equilibrium.
%\qed
\end{proof}

%% This theorem leads to a direct proof of the claimed result.
%% \III

\NI \emph{Proof of Theorem~\ref{thm:nosource1}.}
%% \NI We use the following
%% procedure for checking for the existence of a non-trivial Nash
%% equilibrium. 
%% \[\mathsf{Nash(\snet)}\]
On the account of Lemma~\ref{lm:proc-Ne-exists}, the following
procedure can be used to check for the existence of a non-trivial Nash
equilibrium.  

$\mathit{found}:= \mathbf{false}$;

{\bf while} $\products \neq \emptyset$ {\bf and} $\neg \mathit{found}$ {\bf do}

\quad choose $t \in \products$;

\quad $\products := \products - \{t\}$;

\quad compute  $X_t$;

\quad $\mathit{found}:= (X_t \neq \emptyset)$

{\bf od}

{\bf return} $\mathit{found}$

\smallskip

%% On the account of Lemma~\ref{lm:proc-Ne-exists} this
%% procedure returns {\bf true} if a
%% non-trivial Nash equilibrium exists and {\bf false} otherwise.
To assess its complexity, note that for a network
$\snet=(\sgraph,\products,\prodset,\theta)$ and a fixed product $t$,
the set $X_t$ can be constructed in time $\mathcal{O}(n^3)$, where $n$
is the number of nodes in $\sgraph$. Indeed, each iteration of $X_t^m$
requires at most $\mathcal{O}(n^2)$ comparisons and the fixed point is
reached after at most $n$ steps. In the worst case, we need to compute
$X_t$ for every $t \in \products$, so the procedure runs in time
$\mathcal{O}(|\products| \cdot n^3)$. 
%\qed
\VV
In fact, the proof of
Lemma~\ref{lm:proc-Ne-exists} shows that if a non-trivial Nash
equilibrium exists, then it can be constructed in polynomial time as
well.
%%\qed

\section{The FIP and the uniform FIP}
\label{sec:uniform-FIP}
A natural question is whether the games for which we established the
existence of a Nash equilibrium belong to some well-defined class of
strategic games, for instance, games with the finite improvement
property (FIP).  
%% In view of the examples of social networks given in
%% Section \ref{sec:general}, in general these games do not have the
%% FIP. However, 
When the underlying graph of the network is a DAG, the game does
indeed have the FIP. The following theorem shows that the result can
be improved in the case of two player social network games.

\begin{theorem}
\label{thm:2-FIP}
Every two players social network game has the FIP.
\end{theorem}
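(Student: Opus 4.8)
The plan is to analyze the structure of a two-player social network game directly. With players $V = \{1,2\}$, the only possible edges are $1 \to 2$ and $2 \to 1$, so there are essentially three cases for the underlying graph $\sgraph$: (a) no edges, (b) exactly one edge (say $1 \to 2$), (c) both edges present (a $2$-cycle). In case (a) both players are source nodes, their payoffs are constants independent of the other player, so every player's strategy set partitions into "optimal" (any product, payoff $\constutil$) and "suboptimal" ($t_0$, payoff $0$); any improvement path makes at most one strictly improving move per player and hence has length at most $2$, so it is finite. In case (b), player $1$ is a source node (payoff constant as above, handled identically) and player $2$'s payoff depends only on $s_1$ and $s_2$; I would argue that once player $1$ stops moving — which, as in case (a), happens after at most one improving move by player $1$ — player $2$ can make at most one further improving move, since $p_2(s_1,\cdot)$ with $s_1$ fixed is a fixed function and player $2$ will never leave a best response to it. Hence every improvement path is finite.

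The substantive case is (c), the $2$-cycle, where $N(1) = \{2\}$ and $N(2) = \{1\}$ and the edge weights $w_{21}, w_{12}$ lie in $(0,1]$. Here I would exploit the "join the crowd" structure: the key observation is that along any improvement path, once the two players have the \emph{same} non-$t_0$ strategy, say $s_1 = s_2 = t$, the configuration is essentially locked. Indeed, from a joint strategy $(t,t)$, player $1$'s payoff is $w_{21} - \theta(1,t)$; if this is $\geq 0$ then $t$ is a best response for $1$ (deviating to another product gives a strictly negative payoff since player $2$ does not adopt it, and $t_0$ gives $0 \le w_{21}-\theta(1,t)$) so player $1$ will not move, and symmetrically for player $2$; if instead $w_{21}-\theta(1,t) < 0$ then player $1$ strictly prefers $t_0$, moves to $(t_0,t)$, and after that the only way player $1$ would move again is to a best response, and player $2$ now facing $s_1 = t_0$ strictly prefers $t_0$ as well. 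The plan is to formalize a potential-style argument: assign to each joint strategy a value measuring "progress toward a stable configuration" — concretely, track whether player $1$'s current strategy is a best response to $s_2$ and whether player $2$'s current strategy is a best response to $s_1$, and observe that in a $2$-cycle game any move strictly improving for the mover turns that player's "not-best-response" flag into "best-response", while the \emph{number of joint strategies visited is bounded} because there are only $(|P(1)|+1)(|P(2)|+1)$ joint strategies and one can show no joint strategy is visited twice.

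The cleanest route, and the one I would actually write up, is to show directly that a two-player social network game has \emph{no infinite improvement path} by showing it has no improvement cycle and the strategy space is finite — i.e., reduce to showing the game is an \emph{ordinal potential game} or at least has no improvement cycle. For a cycle $(s^1,\dots,s^k = s^1)$, both players must move; consider the moves of player $1$: they are improving steps of the one-variable function $s_1 \mapsto p_1(s_1, s_2)$ for the various fixed values of $s_2$ occurring, and similarly for player $2$. The main obstacle is ruling out a cycle in which the players' moves interleave and each "helps" the other in turn. I expect to close this by the crowd-monotonicity plus the sign structure of payoffs: whenever player $1$ moves to a non-$t_0$ product $t$, he must have $p_1(t,s_2) > p_1(s^{old}_1, s_2) \ge 0$ only if $s_2 = t$ already (otherwise $p_1(t,s_2) < 0$), which forces the two players' strategies to coincide immediately before any "upward" move — and I will show this coincidence propagates to a best response for the mover, contradicting a return to a worse joint strategy. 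Thus the improvement cycle cannot close, every improvement path is finite, and the FIP holds.
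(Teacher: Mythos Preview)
Your handling of cases (a) and (b) is fine; the difficulty is entirely in case (c), the $2$-cycle, and there your proposal has a genuine gap.

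The claim you rely on in your ``cleanest route'' is false. You assert that when player $1$ makes an improving move to a non-$t_0$ product $t$, this forces $s_2 = t$ (since otherwise $p_1(t,s_2) = -\theta(1,t) < 0$). But an improving move only requires $p_1(t,s_2) > p_1(s_1',s_2)$, not $p_1(t,s_2) \geq 0$. If the previous strategy $s_1' = t'$ is a product with $s_2 \neq t'$ and $\theta(1,t) < \theta(1,t')$, then player $1$ improves from payoff $-\theta(1,t')$ to $-\theta(1,t)$ while staying at a negative payoff and with $s_2 \neq t$. So moving to a product does \emph{not} force a coincidence of strategies, and the argument that ``coincidence propagates to a best response and contradicts closure of the cycle'' never gets started. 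Your earlier sketch --- that $(t,t)$ is ``locked'' --- has the same problem: if $w_{21} - \theta(1,t) < 0$, player $1$ may move to $t_0$, but nothing then prevents player $2$ from moving to some other product $t'$ (possibly with negative payoff), after which player $1$ may move to $t'$, and so on. And the sentence ``one can show no joint strategy is visited twice'' is exactly the FIP for a finite game, so it is the thing to be proved, not asserted.

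The paper's argument is substantially different and more delicate. It singles out the \emph{matches} along the path --- joint strategies of the form $(t,t)$ --- and classifies the segment between any two successive matches into four types according to which player moves at each endpoint. For each type it computes how $p_1$ and $p_2$ change between the two matches (e.g., in one type $p_1$ increases by more than $w_{21}$ while $p_2$ decreases by less than $w_{12}$; in another both increase). A counting argument over the types then shows that between any two matches of the same form, say $(\underline{t},t)$ and $(\underline{t_1},t_1)$, at least one of $p_1,p_2$ strictly increases, so $t \neq t_1$ and no match recurs. Finally, in the tail of the path containing no matches, every single step strictly increases the social welfare $p_1 + p_2$, so the tail is finite. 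Your proposal is missing precisely this payoff-tracking between matches; the crowd-monotonicity and sign observations you make are correct ingredients but are not by themselves enough to rule out an improvement cycle.
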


\begin{proof}
By the above comment on DAGs, we can assume that the underlying graph
is a cycle, say $1 \to 2 \to 1$. Consider an improvement path $\rho$.
%% By removing if necessary some steps 
Without loss of generality we can assume that the players
alternate their moves in $\rho$.
In what follows given an element of $\rho$ (that is not the last one)
we underline the strategy
of the player who moves, i.e., selects a better response.
We call each element of $\rho$ of the type $(\underline{t}, t)$
or $(t, \underline{t})$ a \emph{match}.
Further, we shorten the statement ``each time player $i$ switches his
strategy his payoff strictly increases and it never decreases when his
opponent switches strategy'' to ``player $i$'s payoff steadily goes
up''.  

Consider two successive matches in $\rho$, based respectively on
the strategies $t$ and $t_1$. The corresponding segment of $\rho$
is one of the following four types.

\NI
\emph{Type 1}. $(\underline{t}, t) \Rightarrow^{*} (\underline{t_1}, t_1)$.
The fragment of $\rho$ that starts at 
$(\underline{t}, t)$ and finishes at $(\underline{t_1}, t_1)$ has the form:
$
(\underline{t}, t) \Rightarrow (t_2, \underline{t}) \Rightarrow^{*} (t_1, \underline{t_3}) \Rightarrow
(\underline{t_1}, t_1).
$
Then player $1$'s payoff steadily goes up.  Additionally, in the step $(t_1,
\underline{t_3}) \Rightarrow (\underline{t_1}, t_1)$ his payoff increases by
$w_{2 1}$.
In turn, in the step $(\underline{t}, t) \Rightarrow (t_2,
\underline{t})$ player $2$'s payoff decreases by $w_{1 2}$ and in the
remaining steps his payoff steadily goes up.
So $p_1(\bar{t}) + w_{2 1} < p_1(\overline{t_1})$ and $p_2(\bar{t}) - w_{1 2} < p_2(\overline{t_1})$.
%\II

\NI
\emph{Type 2}. $(\underline{t}, t) \Rightarrow^{*} (t_1, \underline{t_1})$.
Then player $1$'s payoff steadily goes up.  In turn, in the first step of
$(\underline{t}, t) \Rightarrow^{*} (t_1, \underline{t_1})$ the payoff
of player 2 decreases by $w_{1 2}$, while in the last step (in which
player 1 moves) his payoff increases by $w_{1 2}$.  So these two
payoff changes cancel against each other.
Additionally, in the
remaining steps player $2$'s payoff steadily goes up. 
So $p_1(\bar{t}) < p_1(\overline{t_1})$ and $p_2(\bar{t}) < p_2(\overline{t_1})$.
%\II

\NI
\emph{Type 3}. $(t, \underline{t}) \Rightarrow^{*} (\underline{t_1}, t_1)$.
This type is symmetric to Type 2, so
$p_1(\bar{t}) < p_1(\overline{t_1})$ and $p_2(\bar{t}) < p_2(\overline{t_1})$.
%\II

\NI
\emph{Type 4}. $(t, \underline{t}) \Rightarrow^{*} (t_1, \underline{t_1})$.
This type is symmetric to Type 1, so
$p_1(\bar{t}) - w_{2 1} < p_1(\overline{t_1})$ and $p_2(\bar{t}) + w_{1 2} < p_2(\overline{t_1})$.
%\II

Table~\ref{tab:T} summarizes the changes in the payoffs between the two matches.

\begin{table}[htbp]
\centering
%%  \begin{center}
%    \leavevmode
\begin{tabular}{|l|l|l|}
\hline 
Type & $p_1$ & $p_2$ \\
\hline 
%\hline 
1    & increases      & decreases \\
     & by $> w_{2 1}$ & by $< w_{1 2}$ \\
\hline
2, 3 & increases      & increases \\
\hline
4    & decreases      & increases \\
     & by $< w_{2 1}$ & by $> w_{1 2}$ \\
\hline
\end{tabular}
%%  \end{center}
\caption{Changes in $p_1$ and $p_2$}
\label{tab:T}
\end{table}

Consider now a match $(\underline{t}, t)$ in $\rho$ and a match $(\underline{t_1}, t_1)$
that appears later in $\rho$.
Let $T_i$ denote the number of internal segments of type $i$ that occur in the fragment of $\rho$
that starts with $(\underline{t}, t)$ and ends with $(\underline{t_1}, t_1)$.
%\II

\NI
\emph{Case 1}. $T_1 \geq T_4$.
Then Table~\ref{tab:T} shows that the aggregate increase in $p_1$ in segments of type 1
exceeds the aggregate decrease in segments of type 4.
So $p_1(\bar{t}) < p_1(\overline{t_1})$.
%\II

\NI
\emph{Case 2}. $T_1 < T_4$.
Then analogously Table~\ref{tab:T} shows that 
$p_2(\bar{t}) < p_2(\overline{t_1})$.
\II

We conclude that $t \neq t_1$. By symmetry the same conclusion holds
if the considered matches are of the form $(t, \underline{t})$ and
$(t_1, \underline{t_1})$.  This proves that each match occurs in
$\rho$ at most once.  So in some suffix $\eta$ of $\rho$ no match occurs.
But each step in $\eta$ increases the social welfare, so $\eta$ is
finite, and so is $\rho$.
%\qed
\end{proof}

The FIP ceases to hold when the underlying graph has cycles. Figure
\ref{fig:br1}(a) gives an example.  Take any threshold and weight
functions which satisfy the condition that an agent gets positive
payoff when he chooses the product picked by his unique predecessor in
the graph. Figure \ref{fig:br1}(b) then shows an infinite improvement
path. In each joint strategy, we underline the strategy that is not a
best response to the choice of other players. Note that at each step
of this improvement path a best response is used. On the other hand,
one can check that for any initial joint strategy there exists a
finite improvement path. This is an instance of a more general result
proved below.

\begin{figure}[ht]
\centering
\begin{tabular}{ccc}
$
\def\objectstyle{\scriptstyle}
\def\labelstyle{\scriptstyle}
\xymatrix@W=10pt @C=15pt{
 &1 \ar[rd]^{} \ar@{}[rd]^<{\{t_1,t_2\}} \ar@{}[rd]^>{\{t_1,t_2\}}\\
3  \ar[ru]_{} \ar@{}[ru]^<{\{t_1,t_2\}} & &2 \ar[ll]^{}\\
}$
%% \caption{\label{fig:br1} A social network with an infinite improvement path}
%% \end{figure}
&
&
%% \begin{figure}[ht]
%% \centering
$
\def\objectstyle{\scriptstyle}
\def\labelstyle{\scriptstyle}
\xymatrix@W=8pt @C=15pt @R=15pt{
(\underline{t_2},t_2,t_1)\ar@{=>}[r]& (t_1, t_2, \underline{t_1})\ar@{=>}[r]& (t_1, \underline{t_2}, t_2)\ar@{=>}[d]\\
(t_2, \underline{t_1}, t_1)\ar@{=>}[u]& (t_2, t_1, \underline{t_2})\ar@{=>}[l]& (\underline{t_1}, t_1, t_2)\ar@{=>}[l]\\
}$
%% \caption{\label{fig:br2} An infinite improvement path}
\\
\\
(a) & & (b)\\
\end{tabular}
\caption{\label{fig:br1} A social network with an infinite improvement path}
\end{figure}
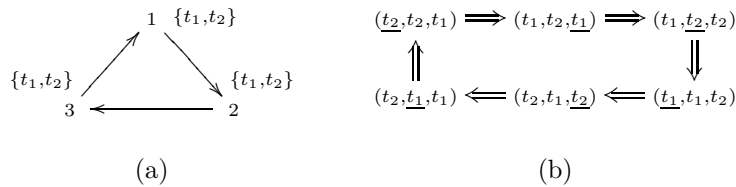 

By a \bfe{scheduler} we mean a function $f$ that given a joint
strategy $s$ that is not a Nash equilibrium selects a player who did
not select in $s$ a best response. An improvement path $\xi=
\strprofile^1,\strprofile^2,\ldots$ \bfe{conforms} to a scheduler $f$
if for all $k$ smaller than the length of $\xi$,
$\strprofile^{k+1}=(\strprofile_i',\strprofile^{k}_{-i})$, where
$f(\strprofile^k)=i$. We say that a strategic game has the
\bfe{uniform FIP} if there exists a scheduler $f$ such that all
improvement paths $\rho$ which conform to $f$ are finite.  The
property of having the uniform FIP is stronger than that of being
weakly acyclic, see \cite{AS12}.
\begin{theorem}
\label{thm:cycle-UFIP}
Let $\snet$ be a network
such that the underlying graph is a simple cycle.
Then the game $\mathcal{G}(\snet)$ has the uniform FIP.
\end{theorem}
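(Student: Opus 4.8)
The plan is to exhibit an explicit scheduler $f$ for which every conforming improvement path is finite. The natural candidate, in the spirit of the proof of Theorem~\ref{thm:2-FIP}, is a scheduler that exploits the linear structure of the cycle to prevent the kind of "rotating" improvement path displayed in Figure~\ref{fig:br1}(b). Concretely, fix an orientation $1 \to 2 \to \cdots \to n \to 1$ of the simple cycle. Whenever the current joint strategy $s$ is not a Nash equilibrium, let $f(s)$ be the player of smallest index who is not playing a best response (any fixed deterministic tie-breaking rule that respects the cyclic order will do; the point is that the choice is consistent). I would first record the elementary payoff facts for a player $i$ on a simple cycle: his payoff depends only on the strategy of his unique predecessor $i-1$, he gets payoff $0$ for $t_0$, and he gets $w_{i-1,i} - \theta(i,t) \ge 0$ exactly when $s_{i-1} = t$ and the threshold is met, so a better response for $i$ is always either to copy $s_{i-1}$ (when that is profitable) or to switch to $t_0$.

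The key structural step is to argue that, once $f$ has "corrected" a contiguous block of players along the orientation, that block stabilises. Starting from the player $f(s)$ chosen, a move by player $i$ makes $i$ a best responder; the only player whose best-response status can be destroyed by this move is the successor $i+1$, so $f$ will next pick $i+1$ (or an earlier unhappy player, but I would show that the corrections propagate forward monotonically until they either reach all the way around or hit a player who is already happy). I would track the quantity "length of the maximal forward-contiguous block of best responders ending at the last-moved player" together with the vector of chosen products, and argue that after finitely many moves the process either reaches a Nash equilibrium or enters a state from which the argument of Theorem~\ref{thm:2-FIP} applies verbatim: matches cannot recur, because a recurring match would force a product to propagate all the way around the cycle and back, and the payoff bookkeeping of Table~\ref{tab:T} (aggregated around the whole cycle rather than between two players) shows that the net effect of a full cyclic pass on the sum of payoffs is strictly positive. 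Hence no match recurs, so eventually the path has a suffix with no matches, on which social welfare strictly increases; since social welfare is bounded, the path is finite.

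The main obstacle I anticipate is handling the case where a product genuinely does propagate around the whole cycle — i.e., showing that a full revolution of the scheduler strictly improves a bounded potential rather than returning to a previously visited state. The two-player proof gets this for free because there are only two "directions"; on an $n$-cycle one must rule out a configuration that shifts rigidly around the ring (as in Figure~\ref{fig:br1}(b)) while every player keeps picking a best response. The resolution should be that $f$'s smallest-index rule breaks the rotational symmetry: the distinguished player $1$ can only be made unhappy by player $n$'s move, so $f$ schedules $1$ only after a full pass through $2,\dots,n$, and at that point I would show either $1$ is already a best responder (the block closes and we have a Nash equilibrium) or $1$'s move strictly raises the aggregate of the "steadily-going-up" payoffs by the cyclic analogue of the $w_{21}$ term, which cannot happen infinitely often. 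Once this cyclic potential argument is in place, the remainder is a routine adaptation of the Type~1--4 / Case~1--2 analysis already carried out for two players, now with the segments laid out along the cycle instead of between two alternating players.
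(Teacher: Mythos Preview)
Your choice of scheduler (smallest index not playing a best response) and your observation that corrections propagate forward so that $f(s)$ can only increase until it hits $n$ are exactly right, and they match the paper. But from that point on you take an unnecessary detour, and the detour is where the real gap lies.

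The paper does \emph{not} generalise the match/Type~1--4 bookkeeping of Theorem~\ref{thm:2-FIP} to the $n$-cycle. Instead it observes something much sharper. First, one may reduce to best-response improvement paths, since if the scheduled player's new strategy is still not a best response, $f$ picks him again, and he can strictly improve against a fixed opponent profile only finitely often. Now, on a best-response path, any player $k$ who holds a product with nonnegative payoff must be copying his predecessor, i.e.\ $s_k = s_{k\ominus 1}$. Hence when $f(s)=n$, players $1,\dots,n-1$ are all best responders, and chaining this copying relation backward from $n-1$ through $1$ and around to $n$ gives: either $s_{n-1}=t_0$, or \emph{every} player, including $n$, already holds the same product $t$. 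In the first case $n$'s best response is $t_0$. In the second case, since $n$ is unhappy while already matching his predecessor, his payoff $w_{n-1,n}-\theta(n,t)$ must be negative, so again his best response is $t_0$. Either way $n$ switches to $t_0$, which makes $1$'s best response $t_0$, then $2$'s, and so on; the path terminates at $\overline{t_0}$ after at most one more pass. No potential, no matches, no Table~\ref{tab:T}.

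By contrast, your proposed route---define matches on the $n$-cycle, show they cannot recur via a cyclic analogue of the $p_1/p_2$ accounting, then argue social welfare strictly increases on a match-free suffix---is not clearly sound as stated. On an $n$-cycle a single move by $i$ can strictly decrease $p_{i+1}$, and the two-player cancellation that made the Type~2/3 cases benign does not obviously survive when there are $n-1$ intermediate players between successive ``matches''. You yourself flag the full-revolution case as the main obstacle, and your suggested resolution (``player $1$'s move strictly raises the aggregate by the cyclic analogue of the $w_{21}$ term'') is an assertion, not an argument. The actual resolution is the structural fact above: a full revolution never happens with products, because the moment the scheduler reaches $n$ the only best response is $t_0$.
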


\begin{proof}
We use the scheduler $f$ that given a joint strategy
$s$ chooses the smallest index $i$ such that
$s_i$ is not a best response to $s_{-i}$.  So this scheduler
selects a player again if he did not switch to a best response.
Therefore we can assume that each selected player immediately selects
a best response.

Consider a joint strategy $s$ taken from
a `best response' improvement path.
Observe that for all $k$ if $s_k \in P(k)$ and 
$p_k(s) \geq 0$ (so in particular if $s_k$ is a best response to $s_{-k}$), then
$s_k = s_{k \ominus 1}$. So for all $i >1$, the following property holds:
\[
\mbox{$Z(i)$: if $f(s) = i$ and $s_{i-1} \in P(i-1)$ then for all $j \in
\{n,1,\LL, i-1\}$, $s_j = s_{i-1}$.}
\]
In words: if $i$ is the first player who did not choose a best
response and player $i-1$ strategy is a product, then this product
is a strategy of every earlier player and of player $n$.
Along each `best response' improvement path that conforms to $f$ the
value of $f(s)$ strictly increases until the path terminates or at
certain stage $f(s) = n$. In the latter case if $s_{n-1} = t_0$, then
the unique best response for player $n$ is $t_0$. Otherwise $s_{n-1}
\in P(n-1)$, so on the account of property $Z(n)$ all players'
strategies equal the same product and the payoff of player $n$ is
negative (since $f(s) = n$).  So the unique best response for player
$n$ is $t_0$, as well.

This switch begins a new round with player 1 as the next scheduled player.
Player 1 also switches to $t_0$ and from now on every consecutive
player switches to $t_0$, as well. The resulting path terminates once
player $n-2$ switches to $t_0$.
%\qed
\end{proof}

\section{Concluding remarks}
\label{sec:conc}

In this paper we studied the consequences of adopting products by
agents who form a social network. To this end we analysed a natural
class of strategic games associated with the class of social networks
introduced in \cite{AM11}. 
The following table summarizes our complexity and existence results,
where we refer to the underlying graph with $n$ nodes.

\begin{center}
%\begin{table}
%\centering
\begin{tabular}{|c|c|c|c|c|}
\hline
property               & arbitrary   & DAG              & simple cycle                       & no source \\
               &             &                  &                                    &nodes \\
\hline
Arbitrary NE   & NP-complete & always exists    & always exists                      & always exists \\
Non-trivial NE & NP-complete & always exists    & $\mathcal{O}(|\products| \cdot n)$ & $\mathcal{O}(|\products| \cdot n^3)$ \\
Determined NE  & NP-complete & NP-complete      & $\mathcal{O}(|\products| \cdot n)$ & NP-complete \\
FIP            & co-NP-hard  & yes              & --                                 & co-NP-hard \\ 
Uniform FIP    & co-NP-hard  & yes              & yes                                & co-NP-hard\\
Weakly acyclic & co-NP-hard  & yes              & yes                                & co-NP-hard\\
\hline
\end{tabular}
%\caption{Summary of the complexity and existence results}
%\end{table}
\end{center}

%% \subsection{Final comments}

In the definition of the social network games we took a number of
simplifying assumptions. In particular, we stipulated that the source
nodes have a constant payoff $\constutil > 0$.  One could allow the source
nodes to have arbitrary positive utility for different products. This
would not affect any proofs. Indeed, in the Nash equilibria the source
nodes would select only the products with the highest payoff, so the
other products in their product sets could be disregarded. Further,
the FIP, the uniform FIP and weak acyclicity of a social network game
is obviously not affected by such a modification.

%% We could also allow the weights to be parametrized by a product. The
%% corresponding expression in the definition of the payoff function
%% would then become $\sum_{j \in \inflset_i^t(\strprofile)}
%% w_{ji}(t) - \theta(i,t)$. In contrast, as shown in Example~\ref{exa:12},
%% such a modification can affect some of the positive results.

The results of this paper can be slightly generalized by using a more
general notion of a threshold that would also depend on the set of
neighbours who adopted a given product. In this more general setup for
$i \in V$, $t \in \prodset(i)$ and $X \subseteq \neighbour(i)$, the
\bfe{threshold function} $\theta$ yields a value $\theta(i,t,X) \in
(0,1]$ and satisfies the following \bfe{monotonicity} condition: if
  $X_1 \subseteq X_2$ then $\theta(i,t,X_1) \geq \theta(i,t,X_2)$.
  Intuitively, agent $i$'s resistance to adopt a product decreases
  when the set of its neighbours who adopted it increases.  We decided
  not to use this definition for the sake of readability.

This work can be pursued in a couple of natural directions. One is the
study of social networks with other classes of underlying graphs.
Another is an investigation of the complexity results for other
classes of social networks, in particular for the equitable ones,
i.e., networks in which the weight functions are defined as
$\weight_{ij} = \frac{1}{|\neighbour(i)|}$ nodes $i$ and $j \in
\neighbour(i)$. One could also consider other equilibrium concepts
like the strict Nash equilibrium.

Finally, we also initiated a study of slightly different games, in
which the players are obliged to choose a product, so the games in
which the strategy $t_0$ is absent. Such games naturally correspond to
situations in which the agents always choose a product, for instance a
subscription for their mobile telephone. These games substantially
differ from the ones considered here. For example, Nash equilibrium
may not exist when the underlying graph is a simple cycle.

\bibliographystyle{abbrv}
\bibliography{/ufs/apt/bib/e}

\end{document}